\def\idrm#1{\ensuremath{\mathrm{#1}}}
\def\idtt#1{\ensuremath{\mathtt{#1}}}
\newcommand{\cM}{{\cal M}}
\newcommand{\cL}{{\cal L}}
\newcommand{\cR}{{\cal R}}
\newcommand{\cV}{{\cal V}}
\newcommand{\insp}{\idrm{insert}^+}
\newcommand{\delp}{\idrm{delete}^+}
\newcommand{\insm}{\idrm{insert}^-}
\newcommand{\delm}{\idrm{delete}^-}
\newcommand{\send}{\idtt{end}}
\newcommand{\sstart}{\idtt{start}}
\newcommand{\sleft}{\idtt{start}}
\newcommand{\sright}{\idtt{end}}
\newcommand{\sfirst}{\idtt{first}}
\newcommand{\slast}{\idtt{last}}
\newcommand{\su}{\idtt{u}}
\newcommand{\segl}{\idtt{l}}
\newcommand{\inter}{\idtt{int}}
\newcommand{\ptr}{\idrm{ptr}}
\newtheorem{invariant}{Invariant}
\newcommand{\halfrightsect}[2]{[#1,#2)}
\newcommand{\halfleftsect}[2]{(#1,#2]}
\renewenvironment{proof}{\trivlist\item[]\emph{Proof}:}%
{\unskip\nobreak\hskip 1em plus 1fil\nobreak$\Box$
\parfillskip=0pt%
\endtrivlist}
\begin{document}
%tentative title
\title{Range Reporting for Moving Points on a Grid}
\author{Marek Karpinski\inst{1},
    J. Ian Munro\inst{2},
    and Yakov Nekrich\inst{1}}
\authorrunning{M. Karpinski, J.~I. Munro and Y. Nekrich}
\institute{
    Department of Computer Science\\
    University of Bonn\\
    \email{\{marek,yasha\}@cs.uni-bonn.de}\\\mbox{}
  \and 
    Cheriton School of Computer Science\\
    University of Waterloo\\
    \email{imunro@uwaterloo.ca}
}
\maketitle
\begin{abstract}
In this paper we describe  a new data structure that supports orthogonal range 
reporting queries on a set of points that move along linear trajectories 
on a $U\times U$ grid. The assumption that points lie 
on a $U\times U$ grid enables us to significantly decrease the query time in 
comparison to the standard kinetic model. 
Our data structure  answers queries in $O(\sqrt{\log U/\log \log U}+k)$ time, 
where $k$
 denotes the number of points in the answer. The above improves over the 
$\Omega(\log n)$ lower bound that is valid in the infinite-precision 
kinetic model.
The methods used in this paper could  be also of independent interest.  
%and supports updates after 
%kinetic events in poly-logarithmic time. The total number of events 
%is bounded by $O(n^2)$.  
\end{abstract}
\section{Introduction}
\label{sec:intro}
Data structures for querying moving objects were extensively investigated 
in computational geometry and database communities. 
The orthogonal range reporting problem, i.e. the problem of storing 
a set of points $S$ in a data structure so that all points in a 
query rectangle $Q$ can be reported, was also extensively studied 
for the case of moving points.
In this paper we describe a data structure that supports range reporting 
queries for a set of moving points on a $U\times U$ grid, i.e., 
when all point coordinates are positive integers bounded by a parameter $U$.

{\bf Previous and Related Results.} 
The kinetic data structure framework  proposed by 
Basch \emph{et al.}~\cite{BGH99}  is the standard model for 
studying moving objects in computational geometry. The main idea of their 
approach is to update the data structure for a set $S$ of continuously moving 
objects only at certain  moments of time: updates are performed 
only  when certain \emph{events} changing the relevant combinatorial structure 
of the set $S$ occur. For instance, the data structure may be updated when 
the order of projections of points on the $x$-axis changes or 
the closest pair of points in $S$ changes; see e.g.,~\cite{BGH99,G98} for a 
more detailed description.

The kinetic variant of the range tree data structure
 was presented by Basch, Guibas, and Zhang~\cite{BGZ99}; their data structure 
uses $O(n\log^{d-1} n)$ space,  answers $d$-dimensional queries in 
$O(\log^{d} n+k)$ time, and can be updated after each event in 
$O(\log^{d} n)$ 
time; henceforth $k$ denotes the number of points in the answer. 
The two-dimensional data structure of Agarwal, Arge, and 
Erickson~\cite{AAE03} supports  range reporting queries 
in $O(\log n+k)$ time and uses $O(n\log n/\log \log n)$ space; the 
cost of updating their data structure after each event is $O(\log^2 n)$.
As follows from standard information-theoretic arguments, the 
$O(\log n)$ query time is optimal in the infinite-precision 
kinetic model. 
Linear space kinetic data structures were considered by Agarwal, Gao, and
 Guibas~\cite{AGG02} and Abam, de Berg, and Speckmann~\cite{AdBS09}. 
However these data structures have significantly higher query times: 
the fastest linear space construction~\cite{AdBS09} answers $d$-dimensional 
queries in $O(n^{1-1/d}+k)$ time. 

A number of geometric problems can be solved more efficiently when 
points lie on a grid, i.e.,  when coordinates of points are 
integers\footnote{For simplicity, we assume that 
all points have positive coordinates.}  bounded by a parameter $U$.
In the case of range reporting, significant speed-up can be achieved 
if the set of points $S$ does not change.
%In the static scenario, if the set of points $S$ does not change, 
There
are static  data structures that support orthogonal range reporting queries 
in $O(\log \log U +k)$ time~\cite{O88,ABR00}. 
On the other hand, %in the dynamic scenario 
if points can be inserted into or deleted from $S$, then 
any data structure that 
supports updates in $\log^{O(1)}n$ time needs $\Omega(\log n/\log \log n+k)$ 
time to answer a two-dimensional range reporting query~\cite{AHR98}. 
This bound is also valid in the case when all points belong to 
a $U\times U$ grid. 

{\bf Our Result.} 
In this paper we consider the situation  when coordinates of moving 
points belong to a $U\times U$ grid. 
Our data structure supports orthogonal range reporting queries in 
$O(\sqrt{\log U/\log \log U} + k)$ 
time. This result is valid in the standard kinetic model with additional 
conditions that all points move with fixed velocities along linear 
trajectories and all changes in the trajectories are known in advance.
Queries can be answered at any time $t$, where $t$ is a positive  integer 
bounded by $U^{O(1)}$.
Updates are performed only when  $x$- or $y$-coordinates of any two points 
in $S$ swap their relative positions, and each update takes  poly-logarithmic 
time. 
The total number of events after which the data structure must be updated 
 is $O(n^2)$. 
For instance, for  $U=n^{O(1)}$ our data structure answers queries in 
$O(\sqrt{\log n/\log \log n}+k)$ time.
Our result also demonstrates that  the lower bound for dynamic range 
reporting queries can be surpassed in the case when the set $S$ consists of 
linearly moving points. Our data structure uses $O(n\log^2 n)$ space and 
supports updates in $O(\log^3 n)$ time, but space usage and update cost
can be reduced if only special cases of reporting queries must be supported.
We describe  a $O(n)$ space data structure that supports updates 
in $O(\log n)$ time and dominance queries in 
$O(\sqrt{\log U/\log \log U} + k)$ time. We also describe  a $O(n\log n)$ 
space data structure that supports updates in $O(\log^2 n)$ time and
 three-sided\footnote{The query range of a dominance query 
is a product of two half-open intervals. The query range of a three-sided 
query is a product of a closed interval and a half-open interval.}  queries
in $O(\sqrt{\log U/\log \log U} + k)$ time.

\section{Overview}
In section~\ref{sec:onedim} we show that we can find the predecessor point of 
any $v\in U$ in the set $S$ (with respect to $x$- or $y$-coordinates) 
in $O(\sqrt{\log U/\log \log U})$ time by answering a point location query 
among a set of segments. In fact, identifying the predecessor of a 
point $q$ is the bottleneck of our query answering procedure. 

In section~\ref{sec:domin} we describe the data structure that reports 
all points $p\in S$ that dominate the query point $q$, i.e. all points 
$p$ such that $p.x\geq q.x$ and $p.y\geq q.y$; henceforth 
$p.x$ and $p.y$ denote the $x$- and  $y$-coordinates 
of a point $p$. The query time of our data structure is 
 $O(\sqrt{\log U/\log \log U} + k)$. The data structure 
is based on the modification of the $d$-approximate boundary~\cite{VV96}
for the kinetic framework. 
The $d$-approximate boundary~\cite{VV96} enables us to obtain an estimation 
for the number of points in $S$ that dominate an arbitrary point $q$. 
If a  point $q$  dominates a point on a $d$-approximate boundary $\cM$, 
then $q$  
is dominated by at most  $2d$ points of $S$; if $q$  
is dominated by a point on $\cM$, then $q$ is dominated by at least $d$ points 
of $S$. 
In section~\ref{sec:domin} we show that  a variant of a $d$-approximate 
boundary can be maintained under kinetic events.
If a query point $q$ is dominated by $k\leq \log n$ points of $S$, 
we can reduce the dominance query on $S$ to a dominance query 
on a set that contains $O(\log n)$ points using a $d$-approximate 
boundary for $d=\log n$; see section~\ref{sec:domin}. 
Otherwise, if $k> \log n$, 
 we can answer a query in $O(\log n + k)=O(k)$ time using a standard 
kinetic data structure~\cite{AAE03}.

%% Using $\cM$ we can reduce
%% It is possible to construct $O(n/d)$ sets $Dom(s)$ so that for  any point 
%% $q$ that dominates a point on $\cM$, all points from the set $S$  
%% that dominate $q$ belong to $Dom(s)$ for some $s$; see 
%% section~\ref{sec:domin}. 
%% We can determine whether a query point $q$ dominates a point on $\cM$ 
%% by answering a one-dimensional searching query in 
%% $O(\sqrt{\log U/\log \log U})$ time.  
%% If $q$ dominates some point $p$ on a $d$-approximate boundary $\cM$ for 
%% $d=\log n$, we identify the set $Dom(s)$ that 
%% consists of $O(\log n)$ points and contains all points that 
%% dominate $q$; then, we report all points that dominate $q$ using the 
%% data structure for $Dom(s)$ in $O(\log \log n +k)$ time.
%% If $q$ is dominated by a point on $\cM$, then $q$ is dominated by at least 
%% $\log n$ points of $S$. In this case  we use a dynamic data structure to 
%% report all points of $S$ that dominate $q$ in $O(\log n + k)=O(k)$ 
%% time. In both cases the query time is $O(\sqrt{\log U/\log \log U} + k)$.

A data structure that supports dominance queries can be transformed 
into a data structure that supports arbitrary orthogonal range reporting
queries~\cite{CG86b,SR95} by dividing the set $S$ into subsets $S_i$ 
and constructing dominance data structures for each $S_i$ as described 
in section~\ref{sec:orth}. However, 
we may have to delete a point $p$ from one subset $S_i$ and insert it into 
a subset $S_j$ after a kinetic event. 
Unfortunately, the construction of~\cite{VV96} is static. 
It is not clear how (and whether) to modify the $d$-approximate 
boundary, so that insertions and deletions are supported. 
However, in our case the deleted (inserted) point always 
has the maximal or minimal  $x$- or $y$-coordinate among all points in 
$S_i$ ($S_j$). 
We will describe in section~\ref{sec:orth} how our dominance data structure
 can be modified to support these special update operations without 
increasing the query time. Our technique is similar to the 
logarithmic method and can be of independent interest. 
Thus we obtain the data structure for 
general orthogonal range reporting queries.
\section{One-Dimensional Searching}
\label{sec:onedim}
Let $S_x(t)$ and $S_y(t)$ denote the sets of $x$- and $y$-coordinates of 
all points at time $t$. 
In this section we will describe how we can identify the predecessor 
of any $q_x$ in $S_x(t)$ (resp.\ of $q_y$ in $S_y(t)$) at current time 
$t$ in $O(\sqrt{\log U/\log\log U})$ time using a linear space data structure.

Let $x_i(t)=a_it+b_i$ be the equation that describes the $x$-coordinate 
of the point $p_i\in S$ at time $t$. 
The trajectory of the point in $(t,x)$ plane ($t$-axis is horizontal) 
is a sequence of segments. 
Since we assume that all changes of point trajectories are known in 
advance, endpoints of  all segments are known in advance. 
%Changes of the $x$-coordinate of a point $p_i$ with time 
%correspond to a segment on the plane $(t,x)$.
Two points swap ranks of their $x$-coordinates at time $t$ 
if and only if their segments intersect at time $t$. 
We can find intersection points of all segments using the standard sweepline 
algorithm~\cite{BO79} in $O((n+f)\log n)$ time, where $f$ is the number 
of segment intersections. We start the sweepline at $t=0$ and move 
it to the right until $n$ intersection points are identified 
or the last intersection point is found. 
These intersection points and the corresponding segments induce  a
subdivision of the $(x,t)$ plane of size $O(n)$. 
We can construct the data structure for planar point 
location~\cite{CP09} that supports queries in $O(\sqrt{\log U/\log \log U})$ 
time. 
Let $t_{max}$ be the largest $t$-coordinate of the already processed 
intersection point. For $t< t_{max}$, we can find the predecessor of
any $x$ by locating the segment lying immediately below the point $(t,x)$. 
When $t=t_{max}$, we continue the sweepline algorithm and find 
the next $n$ segment intersection points. 
The algorithm described in~\cite{BO79} finds  $f$ next segment intersection 
points $O(f\log n)$ time. Since  the point location data structure~\cite{CP09}
for a subdivision of size $f$ can be constructed in $O(f)$ time,
an amortized cost of processing a kinetic  event is $O(\log n)$.
We can de-amortize the update cost using standard techniques.
\section{Dominance Queries}
\label{sec:domin}
In this section we describe the data structure that reports 
all points from $S$ that dominate the query point $q$, 
i.e. all points in the region 
$\halfrightsect{q.x}{+\infty}\times\halfrightsect{q.y}{+\infty}$.
Our data structure is based  on maintaining the $d$-approximate 
boundary for a set $S$.
The notion of a $d$-approximate boundary is introduced in~\cite{VV96}; in 
this paper we  change the definition and describe a kinetic version of this 
construction. 

For a horizontal segment $s$, we denote by $\sstart(s)$ and  $\send(s)$ 
 $x$-coordinates of the left and the right endpoint of $s$; 
we denote by $y(s)$ the $y$-coordinate of all points on $s$.
We will say that a segment $s$ covers a point $p$ 
if the $x$-coordinate of $p$ belongs to $[\sstart(s),\send(s)]$.
In this paper we define a  $d$-approximate boundary as a polyline $\cM$ that consists 
of alternating horizontal and  vertical segments, divides the plane into 
two parts, and  satisfies the following properties:
\begin{invariant}\label{inv:segm}
Let  $s$ and $r$ be two consecutive horizontal 
segments. Then $|\{p\in S\,|\, \sstart(s) \leq p.x \leq \send(r) \}| > d/2$. 
\end{invariant}
\begin{invariant}\label{inv:lpoint}
When a new segment $s$ is inserted, the left endpoint of $s$ is dominated 
by  at most $3d/2$ points of $S$. The number of points in $S$ 
that dominate the left endpoint of a segment $s\in \cM$ does not exceed 
$2d$.  
\end{invariant}
\begin{invariant}\label{inv:rpoint}
When a new segment $s$ is inserted, the right endpoint of $s$ is dominated 
by  at least $d$  points. The number of points in $S$ 
that dominate the right endpoint of a segment $s\in \cM$ remains constant.  
\end{invariant}
Using Invariants~\ref{inv:segm}-\ref{inv:rpoint}, we can prove the following
 Lemma.
\begin{lemma}
Every point on a $d$-approximate boundary $\cM$ is dominated by 
 at least $d$ points and at most  $2d$ points of $S$. There are $O(n/d)$ 
horizontal segments in $\cM$. 
\end{lemma}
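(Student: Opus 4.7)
The plan is to deduce both claims directly from the three invariants by exploiting the monotonicity of the dominator count along each individual segment of $\cM$.

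For the bounds $d$ and $2d$, I would split by segment type. Along a horizontal segment $s$, the number of points that dominate $(x,y(s))$ is non-increasing in $x$; hence the maximum is attained at the left endpoint of $s$ and the minimum at the right endpoint. Invariant~\ref{inv:lpoint} gives at most $2d$ dominators at the left endpoint and Invariant~\ref{inv:rpoint} gives at least $d$ dominators at the right endpoint, so every point on $s$ is dominated by between $d$ and $2d$ points. For a vertical segment, the alternating structure of $\cM$ forces its upper endpoint to coincide with the right endpoint of one horizontal segment of $\cM$ and its lower endpoint with the left endpoint of the next. The dominator count is non-increasing in $y$ along this vertical segment, so the same two invariants bound it by $2d$ at the bottom and by $d$ from below at the top, and both bounds propagate to every interior point of the segment.

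For the combinatorial bound, I would list the horizontal segments of $\cM$ in polyline order as $s_1,\ldots,s_m$ and group them into the disjoint consecutive pairs $(s_{2i-1},s_{2i})$. Because the vertical segment connecting $s_j$ to $s_{j+1}$ has constant $x$-coordinate, we have $\send(s_j)=\sstart(s_{j+1})$; consequently the intervals $I_i:=[\sstart(s_{2i-1}),\send(s_{2i})]$ that appear in Invariant~\ref{inv:segm} are pairwise almost disjoint, meeting their neighbours in at most a single $x$-value. Invariant~\ref{inv:segm} guarantees that each $I_i$ contains more than $d/2$ points of $S$, while each point of $S$ lies in at most two of the $I_i$. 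Summing over the $\lfloor m/2\rfloor$ pairs gives $\lfloor m/2\rfloor\cdot(d/2)<2n$, and therefore $m=O(n/d)$.

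The only step requiring care, rather than calculation, is pinning down the geometric picture: one has to observe that $\cM$ is a staircase-shaped alternating polyline in which each vertical segment precisely glues the right endpoint of a horizontal segment to the left endpoint of the next, and that dominance monotonicity therefore applies on each segment individually. Once this is in place, Invariants~\ref{inv:segm}--\ref{inv:rpoint} deliver both parts of the lemma with no further work.
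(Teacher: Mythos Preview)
Your proof is correct and follows essentially the same approach as the paper. Both arguments reduce the $d$/$2d$ bounds to the endpoint estimates of Invariants~\ref{inv:lpoint} and~\ref{inv:rpoint} via monotonicity of the dominator count along a segment, and both count horizontal segments by applying Invariant~\ref{inv:segm} to consecutive pairs; your grouping into disjoint pairs is a minor variant of the paper's overlapping-pairs count and gives the same $O(n/d)$ bound.
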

\begin{proof}
If a point $p\in \cM$ is dominated by $k$ points from $S$, then 
the left endpoint of some segment $s$ is dominated by at least $k$ 
points and the right endpoint of some segment $r$ is dominated by at most 
$k$ points. Hence, it follows from Invariants~\ref{inv:lpoint} and \ref{inv:rpoint} that $d\leq k < 2d$. 
By Invariant~\ref{inv:segm}, for two consecutive segments $r$ and $s$ 
there are more than $d/2$ points $p\in S$, such that $p.x$ belongs to the 
interval $[\sstart(r),\send(s)]$. 
For each point $p$, $p.x$ belongs to at most two such intervals; hence,
 the total number of segments is less than $8n/d$. 
\end{proof}
An example of a $d$-approximate boundary is shown on
 Fig~\ref{fig:tapprox-example}. 
\begin{figure}[bt]
  \centering
\includegraphics[width=.5\textwidth]{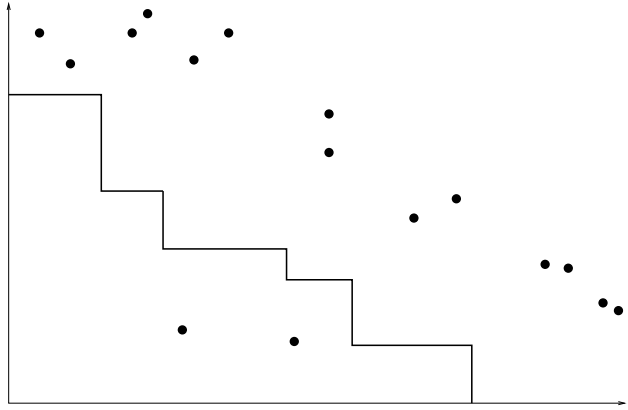}
\caption{An example of a $d$-approximate boundary for $d=6$.}
\label{fig:tapprox-example}
\end{figure}
We will show below how the concept of a $d$-approximate boundary can be used 
to support dominance queries in $O(\sqrt{\log U/\log \log U} + k)$ time. 
Later in this section we will show how Invariants~\ref{inv:segm}
-\ref{inv:rpoint} can be maintained. \\
{\bf Kinetic Boundary.}
We will use a kinetic variant of the $d$-approximate boundary, i.e. 
segments of the boundary move together with  points of $S$. 
For every horizontal segment $s$ in a boundary $\cM$, let $\segl(s)$ denote the 
point with the largest $y$-coordinate such that $\segl(s).y < y(s)$ and 
let $\su(s)$ denote the point with the smallest $y$-coordinate such that 
$\su(s).y > y(s)$. 
Let $\sfirst(s)$ denote the point with the smallest $x$-coordinate
such that $\sfirst(s).x > \sstart(s)$; let $\slast(s)$ denote the point with
the largest $x$-coordinate such that $\slast(s).x < \send(s)$. 
We assume that $\sstart(s)=\sfirst(s)-\frac{1}{2}$ and 
$y(s)=\su(s)-\frac{1}{2}$. That is, the left end and the $y$-coordinate 
of a segment change when $\su(s)$ and $\sfirst(s)$ move. The right 
end of the previous segment and the $x$-coordinate of the connecting 
vertical segment change accordingly. 

{\bf Answering Queries.} 
Our data structure is based on a $d$-approximate boundary $\cM$ of $S$ 
for $d=\log n$. 
For each segment $s\in \cM$ we maintain the set $Dom(s)$ of all points 
that dominate the left endpoint of $s$. Obviously, the set $Dom(s)$ 
changes only when events concerning $\sfirst(s)$  or 
$\su(s)$ take place. 

All points of $Dom(s)$ are stored 
in a  data structure $D_s$ that supports dominance queries 
in $O(\log d +k)=O(\log \log n+k)$ time. We can use the data structure 
of~\cite{AAE03}, so that the space usage is $O(d)$ 
and updates after events are supported in $O(\log \log n)$ time. 
It is possible to modify the data structure of~\cite{AAE03}, so that 
points can be inserted into $Dom(s)$ or deleted from $Dom(s)$ in 
$O(\log \log n)$ time. 
 All points of $S$ are also stored in a kinetic  data structure 
$G$ that uses $O(n)$ space and supports dominance queries in $O(\log n +k)$ 
time and updates after kinetic events in $O(\log n)$ time. 
Again, we can use the result of~\cite{AAE03} to implement $G$.
Finally, we must be able to identify  for each point $p\in S$
the segment $s\in \cM$ that covers $p$. Using the 
dynamic union-split-find data structure of~\cite{DR91} or the van Emde Boas 
data structure~\cite{E77}, we can find the segment 
that covers any $p\in S$ in $O(\log \log n)$ time. 
When a new segment is inserted into or deleted from $\cM$, the 
data structure is updated in $O(\log \log n)$ time. 

Given a query point $q$, we identify the point $p$ with the largest 
$x$-coordinate such that $p.x\leq q.x$; this can be done 
in $O(\sqrt{\log U/\log \log U})$ time using the construction described in 
section~\ref{sec:onedim}. 
The point  $q$ dominates a point on $\cM$ if and only if $q$ dominates the 
left endpoint of the segment $s$ that covers $p$ or the left 
endpoint of the segment $h$ that follows $s$. 
Suppose that $q$ dominates a point on $\cM$. Then 
$q$ is dominated by at most $2d$ points 
of $S$. Let $v$ be the left endpoint of a segment $g$, 
such that $v$ is dominated by $q$. 
Each point  $p\in S$ that dominates $q$  also dominates $v$; hence, 
all points that dominate $q$ belong to $Dom(g)$. 
We can use the data structure $D_g$ 
 and report all points that dominate $q$ in $O(\log \log n + k)$ time. 
Suppose that $q$ does not dominate any point on $\cM$. Then there is at least 
one point on $\cM$ that dominates $q$. Hence, there are  
$k\geq d= \log n$ points of $S$ that dominate $q$. 
Using data structure $G$, we can report all points that dominate 
$q$ in $O(\log n + k)=O(k)$ time. 
\\{\bf Maintaining the $d$-Approximate Boundary.} 
It remains to show how to maintain Invariants~\ref{inv:segm}-\ref{inv:rpoint} 
after operations $x$-move and $y$-move. 
Henceforth, we will use the following notation. 
Suppose that $p.x< q.x$ before some kinetic event and $p.x>q.x$ after 
this event. Then, we say that $p$ is $x$-moved behind $q$ ($q$ is $x$-moved 
before $p$).
Suppose that $p.y< q.y$ before some kinetic event and $p.y>q.y$ after 
this event. Then, we say that $p$ is $y$-moved above $q$ ($q$ is $y$-moved 
below $p$). Each kinetic event can be represented as a combination of at most
 one $x$-move and at most one $y$-move.

First,  we consider the Invariant~\ref{inv:lpoint}. Suppose that 
the left endpoint of an interval $s$ is dominated by $2d$ points 
of $S$. Let $h$ be the segment that precedes $s$, i.e., $\send(h)=\sstart(s)$. 
Let $v$ be the vertical segment that connects $h$ and $s$. 
We look for a  point $p$ with $p.x=\sstart(s)$ and $y(s)< p.y < y(h)$ 
such that $p$ is dominated by $3d/2$ points. If such a point 
$p$ exists, we set $p_l=p$ and search for a point $p_r$ with  
$\sstart(s) < p_r.x < \send(s)$ and $p_r.y=p_l.y$ such that $p_r$ is dominated 
by $d$ points of $S$. If there is no such point, i.e., if the point 
$(\send(s),p_l.y)$ is dominated by at least $d$ points, we 
replace $s$ with a segment $s'$ such that  the left endpoint of $s'$
is $p_l$ and the right endpoint of $s'$ is the point $(\send(s),p_l.y)$;
see Fig~\ref{fig:lpoint}a. 
In other words we change the $y$-coordinate of $s$ to $p.y$.
The new set $Dom(s)$ contains $3d/2$ points and can be constructed in 
$O(d)$ time.  
If there is a point $p_r$ with  
$\sstart(s) < p_r.x < \send(s)$ and $p_r.y=p_l.y$ that  is dominated 
by $d$ points of $S$, then  we replace $s$ with two new segments 
$s'$ and $s''$. The left and right 
endpoints of $s'$ are $p'_l$ and $p'_r$ respectively.
The left endpoint of $s''$ is the point $p''$ such that $p''.x=p_r.x$ 
and $p''.y=y(s)$. See Fig~\ref{fig:lpoint}b.
The set $Dom(s')$ contains $3d/2$ points. 
There are at most $d/2$ points $q$ such that $q.x > \sstart(s)$ and 
$y(s)< q.y \leq p.y$; hence, there are  at most $d/2$ points $q$ such that
 $q.x \geq  p_r.x$ and $y(s)< q.y \leq p.y$. Therefore, since 
$p_r$ is dominated by $d$ points of $S$, $p''$ is dominated by at most 
$3d/2$ points of $S$ and $Dom(s'')$ contains at most $3d/2$ points. 
Since $p''$ is dominated by the right endpoint of the segment  $s$, $Dom(s'')$ 
contains at least $d$ points. 
We can construct $Dom(s')$ and $Dom(s'')$ and data structures 
$D_{s'}$ and $D_{s''}$ in $O(d)$ time. 

If the right endpoint of $h$ is dominated by at least $3d/2$ points, 
we shift the vertical segment $v$ in $+x$ direction, so that 
the right endpoint of $h$ is dominated by $d$ points from $S$ 
or the segment $s$ is removed. That is, we identify the point $r$ with 
$r.y=y(h)$ such that either $r.x=\send(s)$ and $r$ is dominated by at least 
$d$ points of $S$ or $r.x< \send(s)$ and $r$ is dominated 
by  $d$ points of $S$. 
If $r.x< \send(s)$, we set $\send(h)=\sstart(s)=r.x$ and 
update $Dom(s)$, $D_s$ accordingly ($O(d)$ points are removed 
from $Dom(s)$ and $D_s$). See Fig~\ref{fig:lpoint}c.
The new left endpoint of $s$ is dominated by at most $3d/2$ points.
  If $r.x=\send(s)$, we remove the segment $s$ with $D_s$ and $Dom(s)$. 

The update procedure removes at most one segment and inserts at 
most two  new segments that satisfy the Invariant~\ref{inv:lpoint}. 
Hence, the update procedure takes $O(d)=O(\log n)$ time. 

\begin{figure}[tb]
  \centering
  \begin{tabular}{ccccc}
  \includegraphics[width=.25\textwidth]{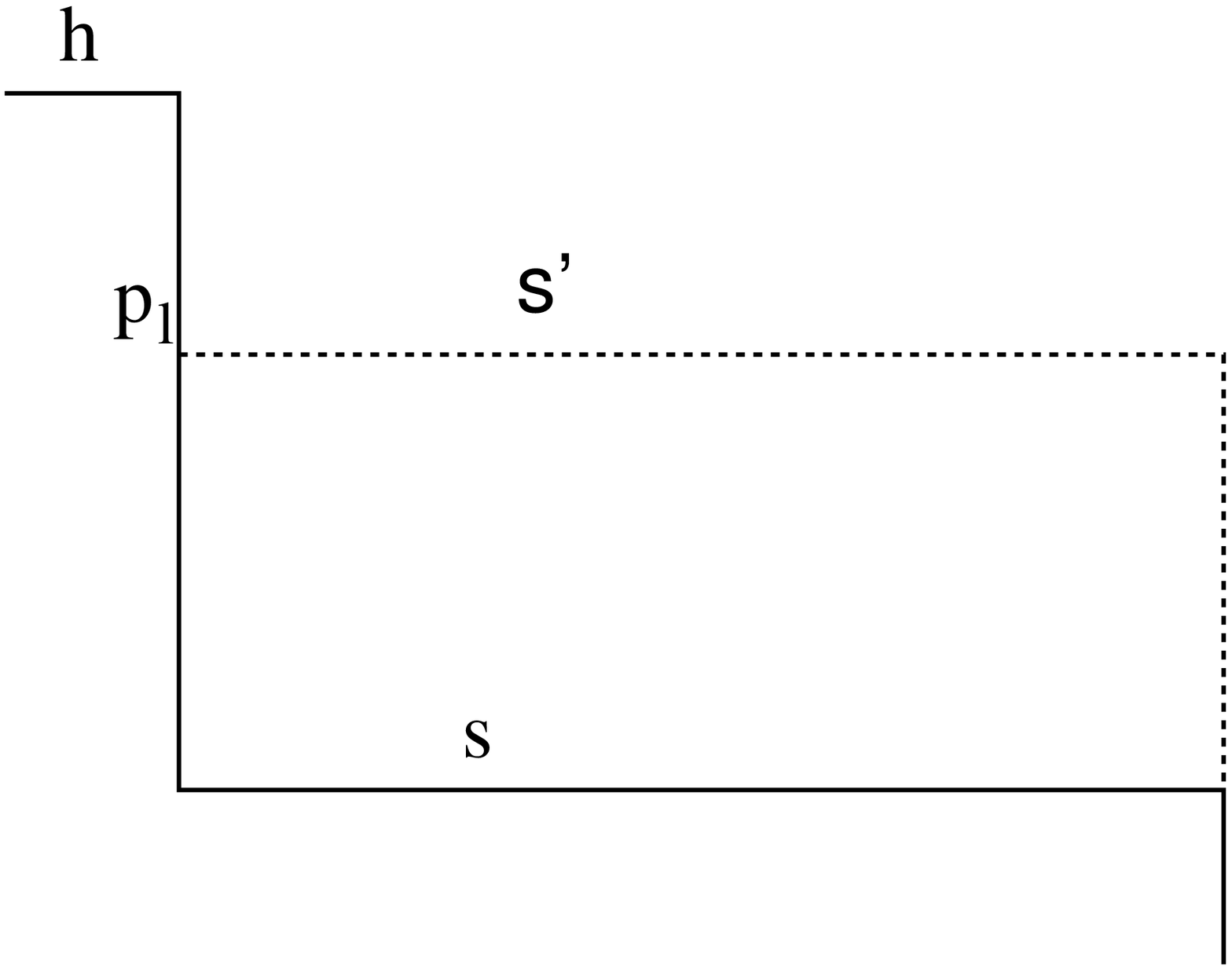} & \hspace*{.2cm} &
  \includegraphics[width=.25\textwidth]{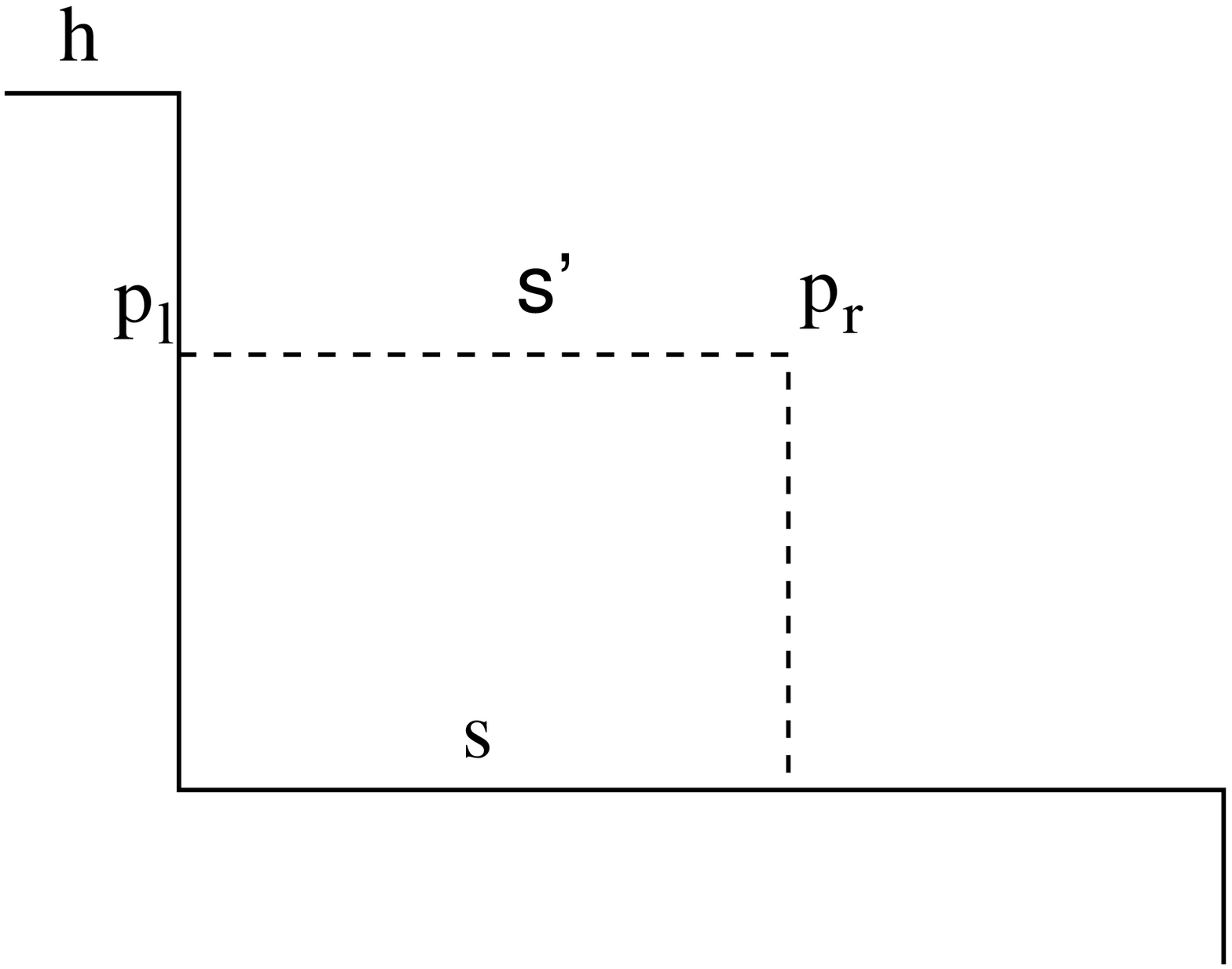} & \hspace*{.2cm} &
  \includegraphics[width=.25\textwidth]{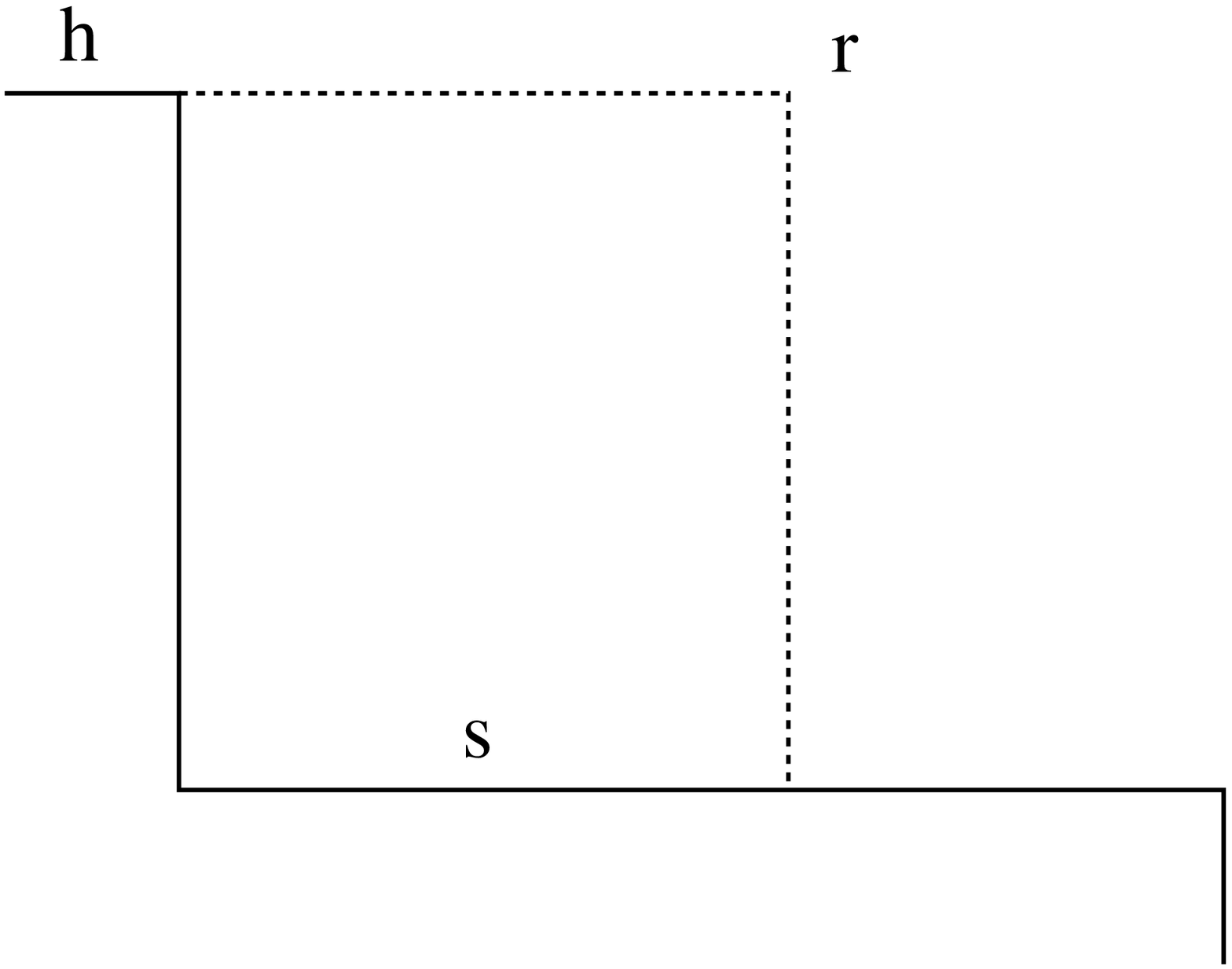}  \\
  {\bf (a)}   &  & {\bf(b)} & & {\bf (c)} \\
  \end{tabular}
  \caption{Updating the $d$-approximate boundary when the 
    Invariant~\ref{inv:lpoint} is violated. For simplicity point of $S$ are
    not shown.
    {\bf (a)} There exists a point $p_l$ but there is no $p_r$. 
    {\bf (b)} There exist both $p_l$ and $p_r$. 
    {\bf (c)} There are neither $p_l$ nor $p_r$.
     }
  \label{fig:lpoint}
\end{figure}

Now we describe how the Invariant~\ref{inv:segm} can be maintained. 
Let $h$ and $s$ be two consecutive segments ($\send(h)=\sstart(s)$)
and suppose that there are $d/2$ points that belong to 
$[\sstart(h), \send(s)]$. 
We replace $h$ and $s$ with one new segment $g$ as follows. 
If the left endpoint of $h$ is dominated by at least $3d/2$ 
points, then we set $\send(h)=\send(s)$ and remove the segment 
$s$, $Dom(s)$, and $D_s$. 
The point $q$ with $q.y=y(h)$ and $q.x=\send(s)$ is the new right 
endpoint of $h$. Since there are at most $d/2$ points $p\in S$ such that 
$\sstart(s) \leq p.x \leq q.x$, $q$ is dominated by at least 
$d$ points of $S$. Hence, the new segment $h$ satisfies 
Invariant~\ref{inv:rpoint}. See Fig.~\ref{fig:segm}a.
If the point $l$ with $l.x=\sstart(h)$ and $l.y=y(s)$ is dominated by 
at most  $3d/2$ points  of $S$, we set $\sstart(s)=\sstart(h)$ 
and remove $h$, $Dom(h)$ and $D_h$. The set $Dom(s)$ and the data 
structure $D_s$ are updated. See Fig.~\ref{fig:segm}b. 
Since there are less than $d/2$ points
$p\in S$, such that $p.y > y(s)$ and $\sstart(h) \leq  p.x \leq \send(h)$, 
$O(d)$ new points are inserted into $Dom(s)$ and $D_s$. 
If the left endpoint of $h$ is dominated by less than $3d/2$ points 
and the point $l$ is dominated by more than $3d/2$ points, then 
we replace $h$ and $s$ with a new segment $g$. The left endpoint 
of $g$ is the point $m$ such that $m.x =\sstart(h)$, 
$y(s) < m.y < y(h)$, and $m$ is dominated by $3d/2$ points. 
The right endpoint of $g$ is the point $r$ with $r.x=\send(s)$ and 
$r.y=m.y$. See Fig.~\ref{fig:segm}c. 
The point $r$ is dominated by at least $d$ points of $S$. 
Hence, $g$ satisfies Invariants~\ref{inv:lpoint} and~\ref{inv:rpoint}. 

\begin{figure}[tb]
  \centering
  \begin{tabular}{ccccc}
  \includegraphics[width=.25\textwidth]{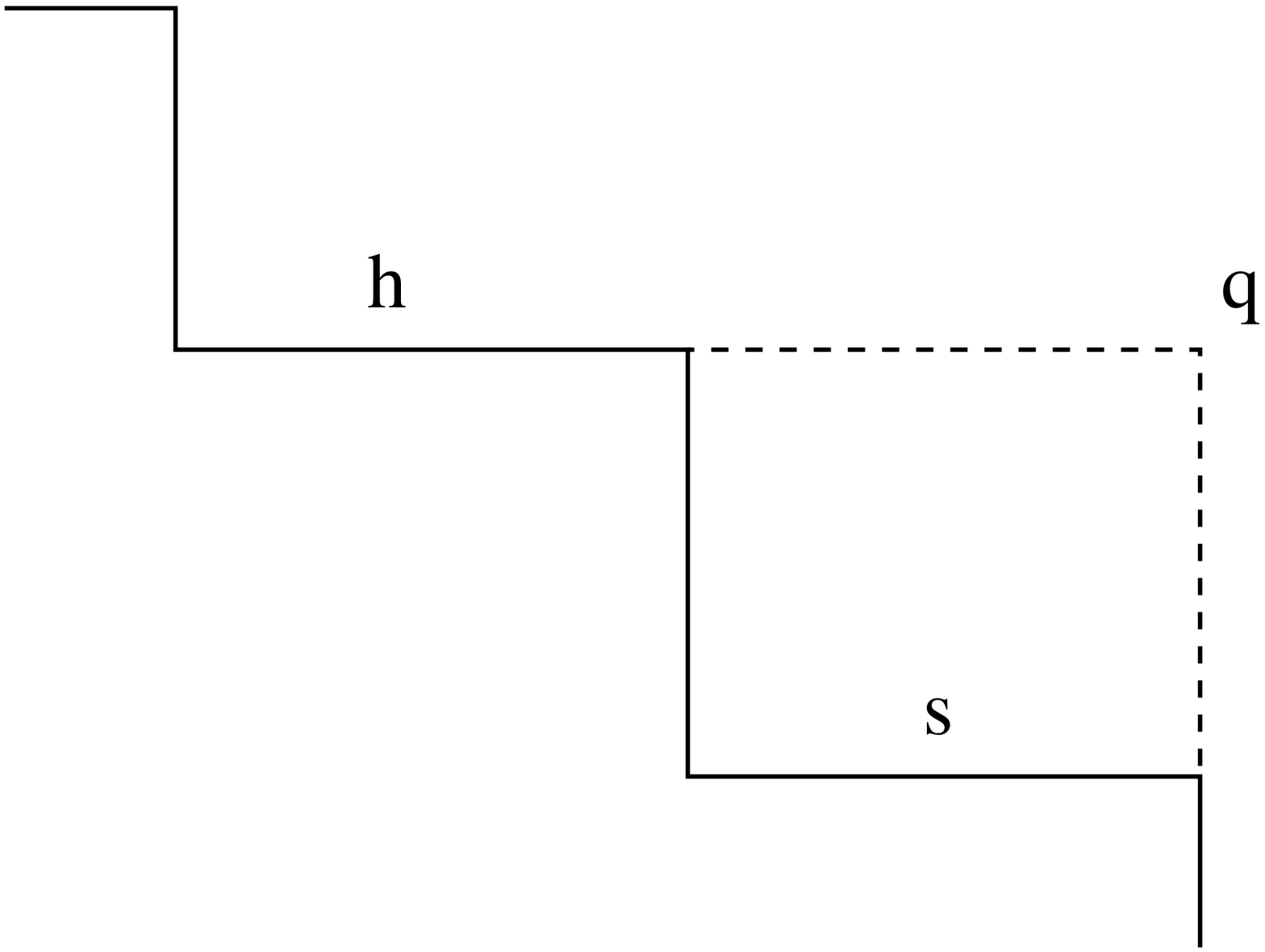} & \hspace*{.2cm} &
  \includegraphics[width=.25\textwidth]{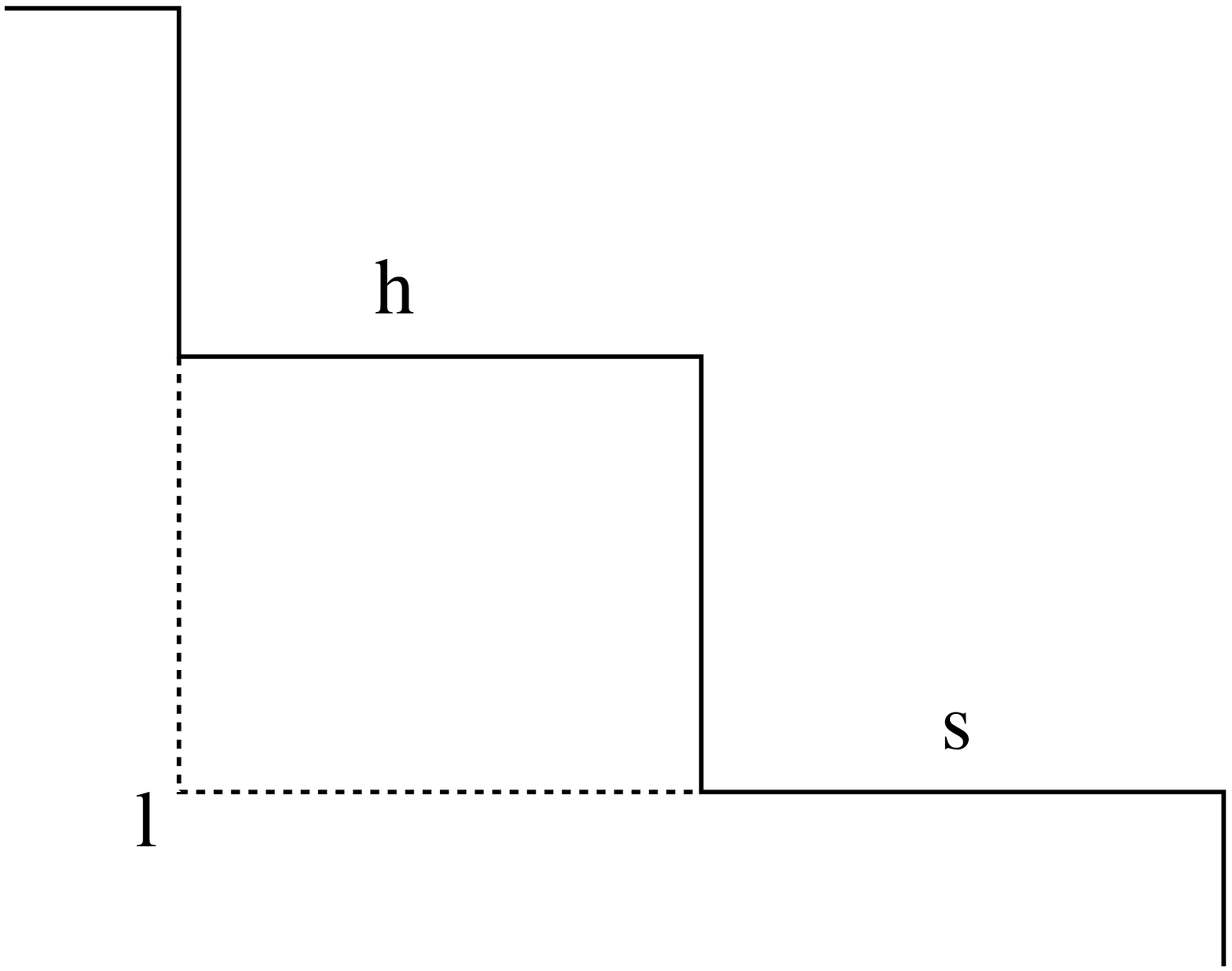} & \hspace*{.2cm} &
  \includegraphics[width=.25\textwidth]{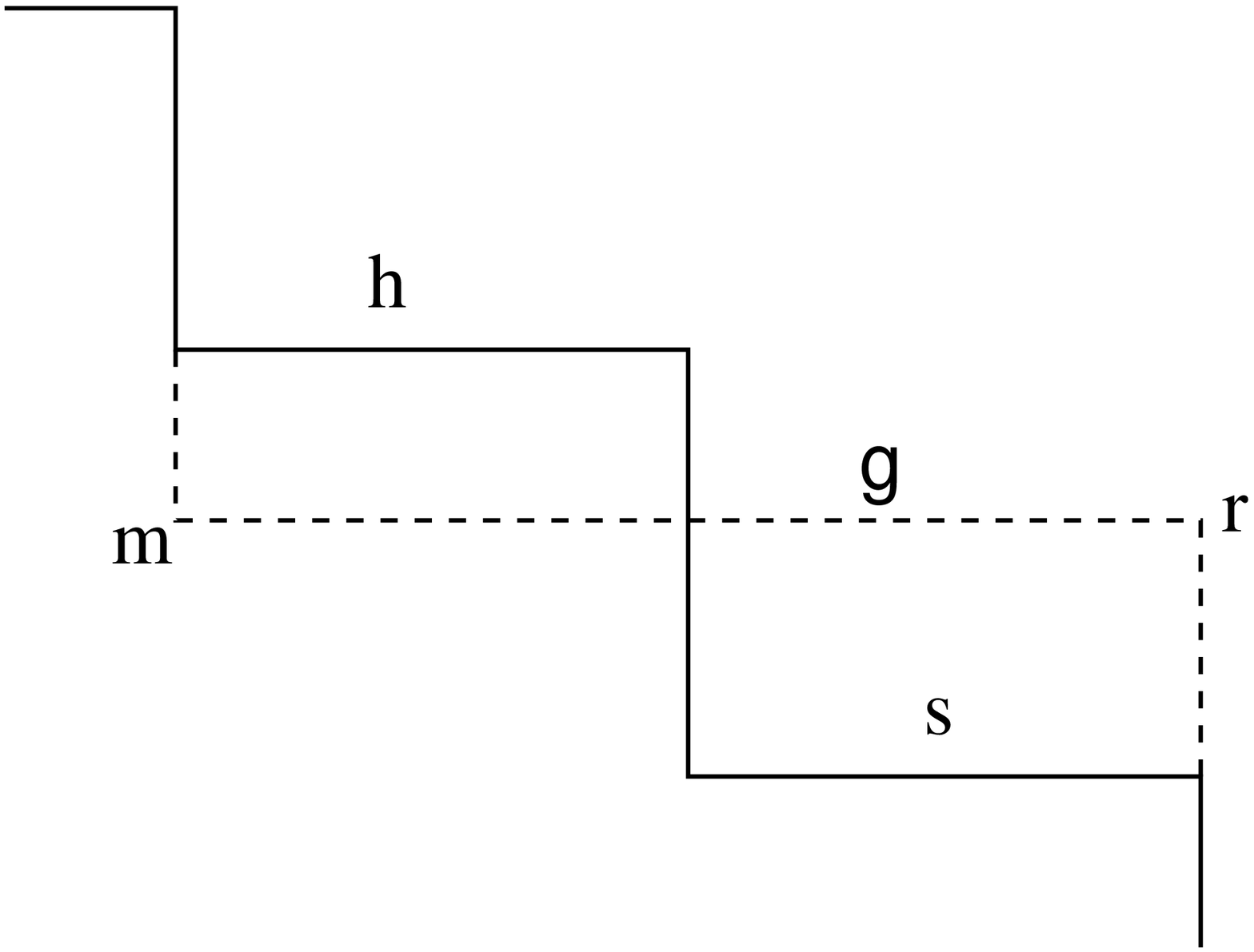}  \\
  {\bf (a)}   &  & {\bf(b)} & & {\bf (c)} \\
  \end{tabular}
  \caption{Updating the $d$-approximate boundary when the 
    Invariant~\ref{inv:segm} is violated. For simplicity points of set $S$ 
    are not shown.
    {\bf (a)} The left endpoint of $h$ is dominated by $\geq 3d/2$ points. 
    {\bf (b)} The point $l$ is dominated by at most $3d/2$ points. 
    {\bf (c)} The left endpoint of $h$ is dominated by less than
    $3d/2$ points but $l$ is dominated by more than $3d/2$ points.
     }
  \label{fig:segm}
\end{figure} 

Now we turn to Invariant~\ref{inv:rpoint}.
The number of points that dominate the right endpoint of  a horizontal segment 
$s$ changes: 1) if a point $p$ with $p.x> \send(s)$ is $y$-moved above $\su(s)$
 or below $\su(s)$; 2) if a point $p$ with $p.y > y(s)$ is $x$-moved 
before $\slast(s)$ or behind $\slast(s)$. 
%Essentially, after a kinetic event 
%the segment $s$ or its left endpoint is shifted so that the number of points 
%in $D_s$ remains unchanged or increases by $1$.

First, we consider $y$-moves. Essentially, when a point $p$ is $y$-moved, 
we shift the segment $s$ in $+y$ or $-y$
 direction so that 
the number of points that dominate the right endpoint of $s$ remains 
unchanged. Suppose that 
 a point $p=\su(s)$ with $p.x > \sright(s)$ is $y$-moved below $\segl(s)$;
let  $y_n$ denote the  $y$-coordinate of $p$.\\
(a) If $\sleft(s) < \segl(s).x < \sright(s)$, then we change the $y$-coordinate 
of   $s$ so that  $y(s)=y_n-\frac{1}{2}$. The old point $\segl(s)$ is added to 
$Dom(s)$ and $D_s$. \\
(b) If $\segl(s).x < \sleft(s)$, then we also change the $y$-coordinate of   
$s$ so that \\ $y(s)=y_n-\frac{1}{2}$. \\
(c) If $\segl(s).x > \sright(s)$, then we  change the $y$-coordinate of   
$s$ so that $y(s)=\segl(s).y-\frac{1}{2}$. We delete $p$ from $Dom(s)$ and $D_s$ and 
insert $\segl(s)$ into $Dom(s)$ and $D_s$.\\
% We don't need this (d) and (f) in the kinetic model - Yakov
%(d) If a point $p'$ is moved below $\su(s)$, then we  change the $y$-coordinate
% of   $s$ so that $y(s)=y_n-\frac{1}{2}$. \\
Suppose that  a point $p$ with $p.x > \sright(s)$ is $y$-moved above  $\su(s)$\\
(d) If $\sleft(s) < \su(s).x < \sright(s)$, then we change the $y$-coordinate
 of   
$s$ so that $y(s)=q.y -\frac{1}{2}$, where $q$ is the point with the smallest 
$y$-coordinate such that $q.y>y_n$. 
The old point $\su(s)$ is removed from 
$Dom(s)$ and $D_s$. If $q=\su(h)$ for the segment $h$ that precedes $s$, 
then  the segment $s$ and $Dom(s)$ are deleted.
% i.e., all points of $Dom(s)$ are added to 
%$Dom(h)$. Using the union-split-find data structure, this can be done 
%in $O(\log \log n)$ time.  
\\
(e) If $\su(s).x < \sleft(s)$, then we also change the $y$-coordinate of   
$s$ so that $y(s)=q.y-\frac{1}{2}$, where $q$ is the point with the 
smallest $y$-coordinate such that $q.y>y_n$. If $q=\su(h)$ for the segment 
$h$ that precedes $s$, then  the segment $s$ and $Dom(s)$ are deleted.
 \\
(f) If $\su(s).x > \sright(s)$, then we  change the $y$-coordinate of   
$s$ so that $y(s)=y_n-\frac{1}{2}$. We delete $\su(s)$ from $Dom(s)$ 
and $D_s$ and insert $p$ into $Dom(s)$ and $D_s$.\\
% We don't need (d) and (h) in the kinetic model - Yakov
%(h) If a point $p$ is $y$-moved above  $\segl(s)$ and $y_n$ is larger than the 
%$y$-coordinate of $s$, then we change the $y$-coordinate of   
%$s$ so that $y(s)=y_n-\frac{1}{2}$.\\
Observe that the number of points in $D_s$ remains unchanged or increases 
by $1$. See Fig~\ref{fig:ymove}. 
\begin{figure}[tb]
  \begin{tabular}{ccccc}
  \includegraphics[width=.25\textwidth]{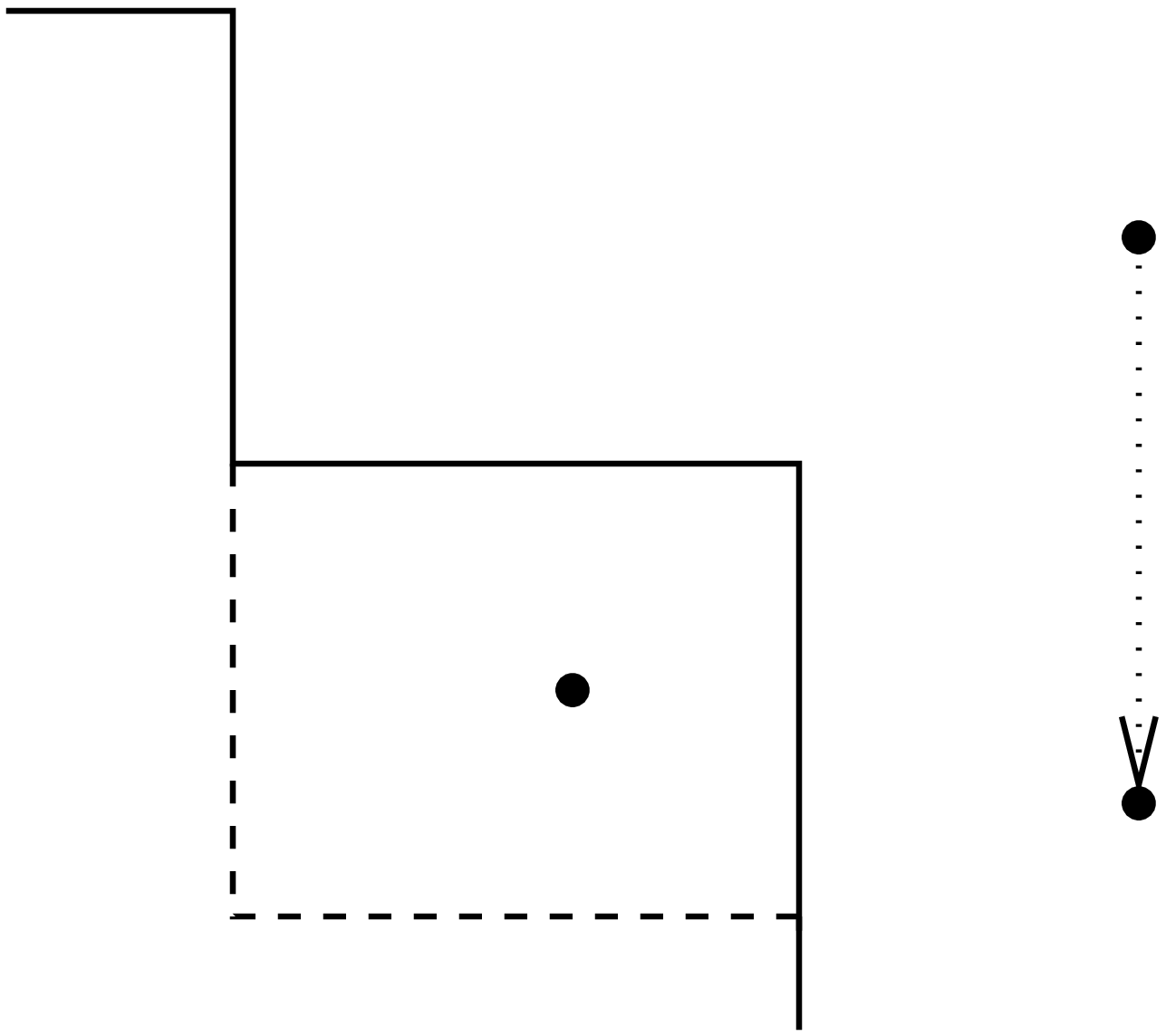} & \hspace*{.9cm} &
  \includegraphics[width=.25\textwidth]{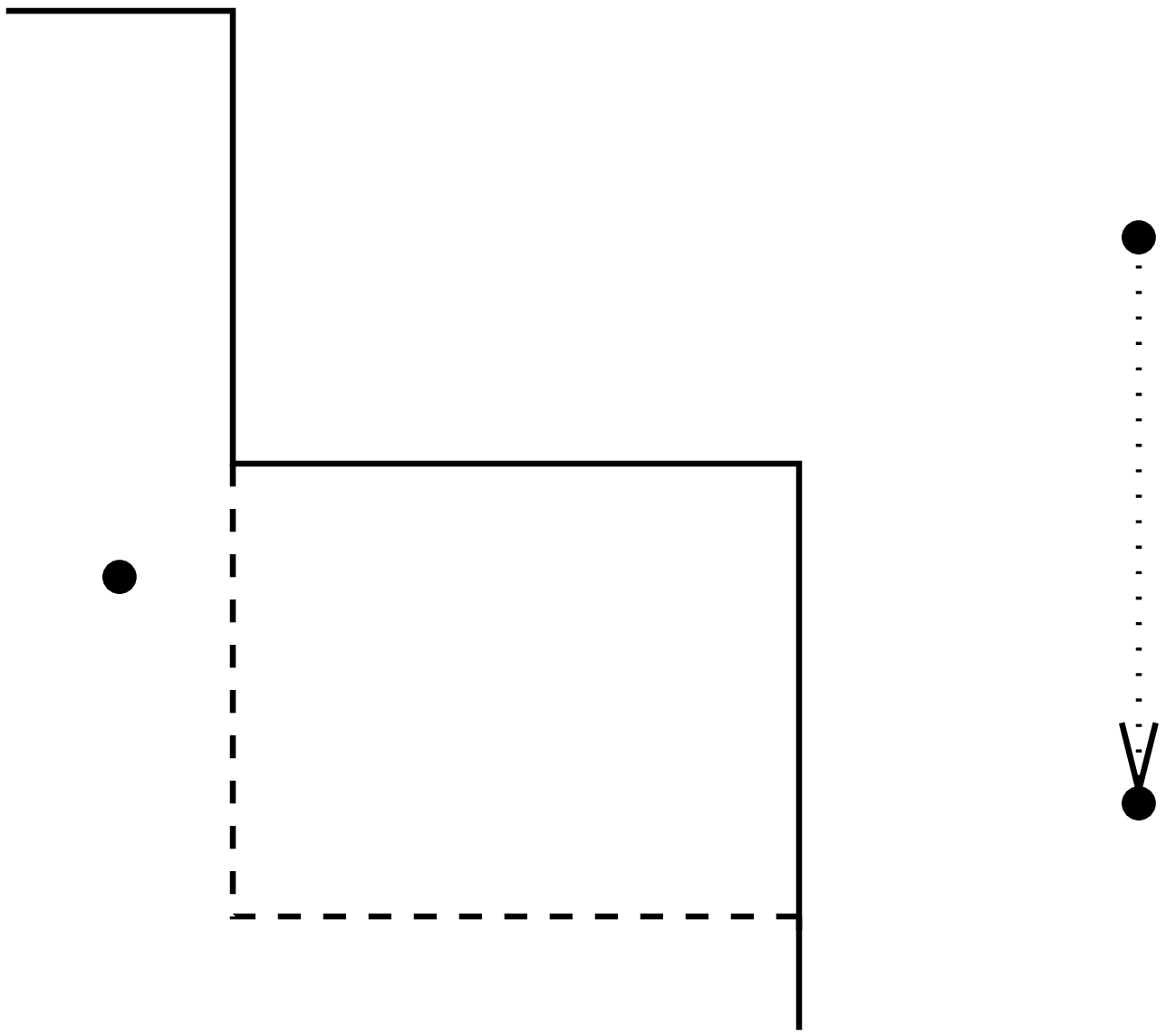} & \hspace*{.9cm} &
  \includegraphics[width=.25\textwidth]{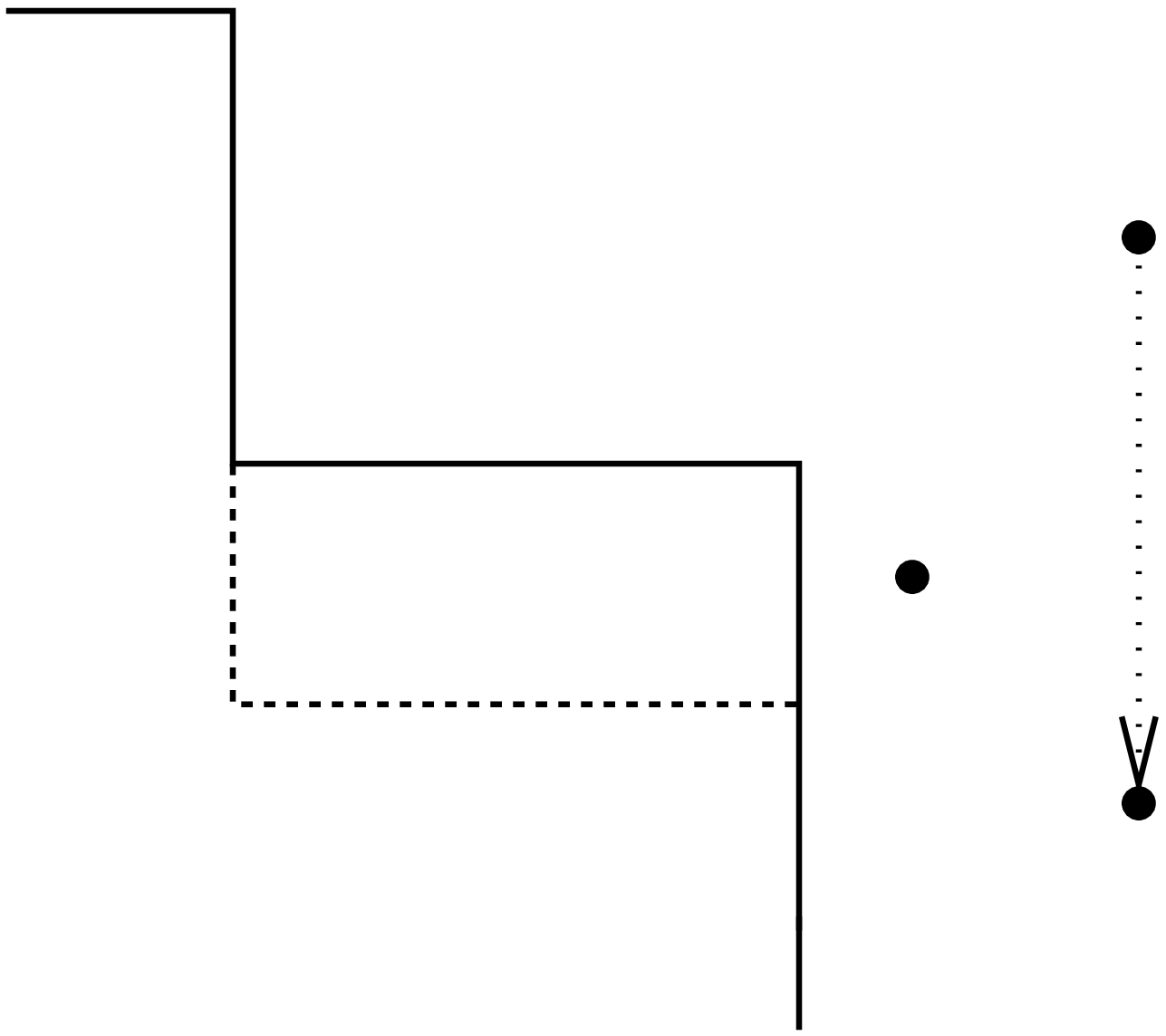}  \\
  {\bf (a)}   &  & {\bf(b)} & & {\bf (c)} \\
  \includegraphics[width=.25\textwidth]{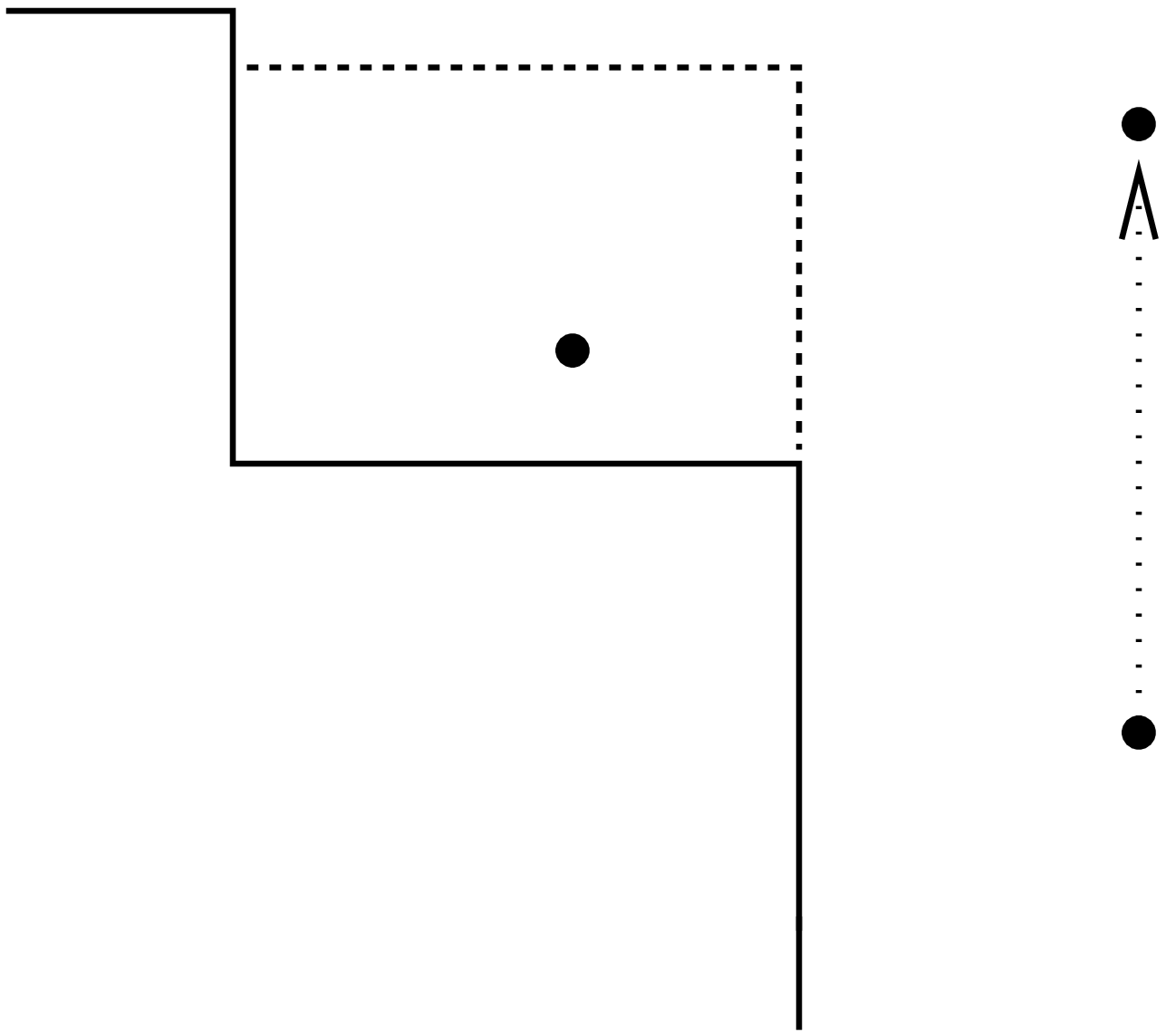} & \hspace*{.3cm} &
  \includegraphics[width=.25\textwidth]{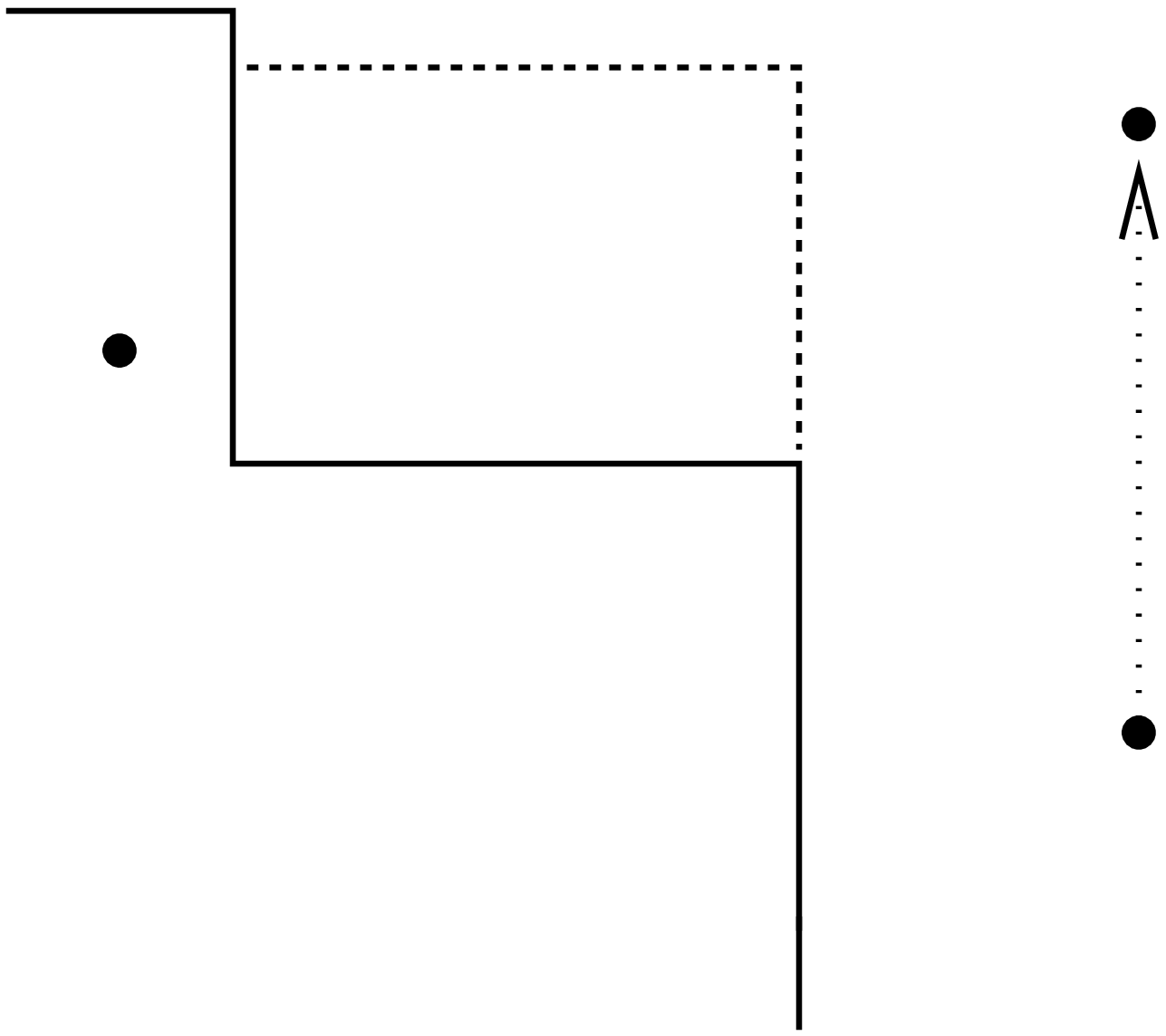} & \hspace*{.3cm} &
  \includegraphics[width=.25\textwidth]{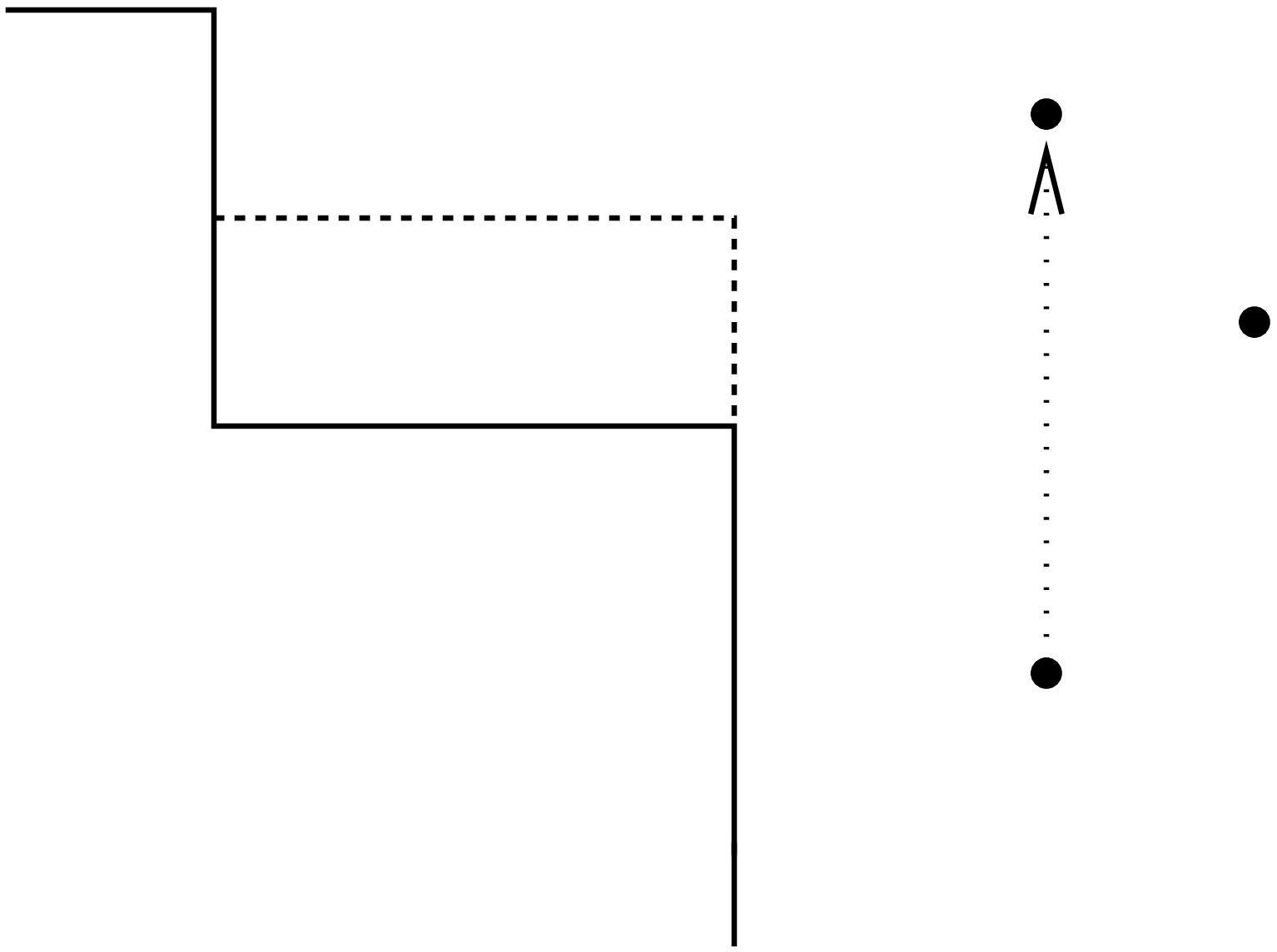}  \\
  {\bf (d)}   &  & {\bf(e)} & & {\bf (f)} \\
  \end{tabular}
  \caption{The segment $s$ is shifted so that Invariant~\ref{inv:segm} 
    is maintained after a $y$-move. Figures {\bf (a)},  {\bf (b)},{\bf (c)}, {\bf (d)}, {\bf (e)},  and {\bf (f)} correspond to cases (a), (b), (c), (d), (e), and (f) respectively.
     }
  \label{fig:ymove}
\end{figure}

We can handle the $x$-moves in a similar way. Let $h$ be the horizontal 
segment that follows $s$ in $\cM$, i.e., $\send(s)=\sstart(h)$. 
When a point $p$ is $x$-moved 
we change $\send(s)$ (and $\sstart(s)$) so that the number of points that 
dominate the right endpoint of $s$ remains unchanged. 
Suppose that 
 a point $p=\slast(s)$ with $p.y > y(s)$ is $x$-moved behind  $\sfirst(h)$;
let $x_n$ denote the new $x$-coordinates of $p$ and let $q$ be the point 
with the smallest $x$-coordinate such that $q.x>x_n$.\\
(a) If $y(h) < \sfirst(h).y < y(s)$, then we set $\sstart(h)=\send(s)=q.x-\frac{1}{2}$ and remove $\sfirst(h)$ from $Dom(h)$ and $D_h$. If $q=\sfirst(h')$ 
for the horizontal segment $h'$ that follows $h$, then we delete 
the segment $h$  and $Dom(h)$. \\
(b) If $\sfirst(h).y >  y(s)$, then we set $\sstart(h)=\send(s)=x_n-\frac{1}{2}$; we also  remove $\sfirst(h)$ from $Dom(h)$ and $D_h$ and insert 
$p$ into $Dom(h)$ and $D_h$.\\
(c) If $\sfirst(h).y <  y(h)$, then we set 
$\sstart(h)=\send(s)=q.x- \frac{1}{2}$. If $q=\sfirst(h')$ 
for the horizontal segment $h'$ that follows $h$, then we delete 
the segment $h$  and $Dom(h)$.\\
% we don't need this (d) and (h) in the kinetic model - Yakov
%(d) If a point $p'$  with $p'.y > y(s)$ is $x$-moved behind  
%$\slast(s)$ and  $x'_n> \send(s)$, 
%then we  set $\sstart(h)=\send(s)=q'.x-\frac{1}{2}$, where $q'$ is the point 
%with the smallest $x$-coordinate such that $q'.x>x'_n$. \\
Suppose that  a point $p$ with $p.y > y(s)$ is $x$-moved before  $\slast(s)$.\\
(d) If $y(h) < \slast(s).y < y(s)$, then we set $\sstart(h)=\send(s)=x_n-\frac{1}{2}$;
we add $\slast(s)$ to  $Dom(h)$ and $D_h$. \\
(e) If $\slast(s).y >  y(s)$, then we set $\sstart(h)=\send(s)=\slast(s).x-\frac{1}{2}$;
we also  remove $p$ from $Dom(h)$ and $D_h$ and insert 
$\slast(s)$ into $Dom(h)$ and $D_h$.\\
(f) If $\slast(s).y <  y(h)$, then we set $\sstart(h)=\send(s)=x_n-\frac{1}{2}$.
We add $\slast(s)$ to $Dom(h)$ and $D_h$  \\
% We don't need (h) in the kinetic model - Yakov
%(h) If a point $p$  with $p.y > y(s)$ is $x$-moved before   $\sfirst(h)$ and
% $x_n < \send(s)$, 
%then we  set $\sstart(h)=\send(s)=x_n-\frac{1}{2}$. \\
Again  the number of points in $D_s$ remains unchanged or increases 
by $1$.

%We describe   how the Invariant~\ref{inv:rpoint} is maintained in Appendix A.
%Now we turn to Invariant~\ref{inv:rpoint}.
%The number of points that dominate the right endpoint of  a horizontal segment 
%$s$ changes: 1) if a point $p$ with $p.x> \send(s)$ is $y$-moved above $\su(s)$
% or below $\su(s)$; 2) if a point $p$ with $p.y > y(s)$ is $x$-moved 
%before $\slast(s)$ or behind $\slast(s)$. 
%After a kinetic event 
%the segment $s$ or its left endpoint is shifted so that the number of points 
%in $D_s$ remains unchanged or increases by $1$.
%Hence, we can maintain Invariant~\ref{inv:rpoint}  by $O(1)$ updates
% of data structures $D_s$ after each event. A detailed description can be 
% found in Appendix A.

 We will show in the full version 
that we update $\cM_i$ because Invariants~\ref{inv:lpoint} or~\ref{inv:segm} 
are violated at most 
once for $\Omega(d)$ events. Update procedures for maintaining 
Invariants~\ref{inv:segm} and~\ref{inv:lpoint} involve inserting and deleting a constant number of segments 
into $\cM$ and the data structure $D_s$ for every segment $s$ contains 
$O(d)$ points. Hence, we must perform $O(1)$ amortized  updates of data 
structures $D_s$ after each kinetic event.  Since every update of 
$D_s$ takes $O(\log \log n )$ time, the amortized cost of updating after an 
event is $O(\log \log n)$. Since each data structure $D_s$ can be constructed 
in $O(\log n)$ time, the worst-case update time is $O(\log n)$. \\
{\bf Construction of a $d$-Approximate Boundary.} 
Now we show that $d$-approximate boundary can be constructed in 
$O(n)$ time if points are sorted by $x$- and $y$-coordinates. 
Since a $d$-approximate boundary consists of alternating horizontal 
and vertical segments, it suffices to determine the endpoints of horizontal 
segments.
We can guarantee that the left endpoint of each segment 
is dominated by $3d/2$ points of $S$ 
and the right endpoint of each segment  is dominated by $d$ points of $S$ 
using the following algorithm.
Lists  $L_x$ and $L_y$ contain points of $S$ sorted in descending 
order of their $x$-coordinates and $y$-coordinates respectively. 
With every element of $L_x$ we store a pointer to its position in $L_y$
and vice versa. 
The pointer $\ptr_x$ ($\ptr_y$) points to the first not yet 
processed element in $L_x$ ($L_y$). 
We construct a sequence of horizontal and vertical segments so that 
the left endpoint of each segment is dominated by $3d/2$ points and the 
right endpoint of each segment is dominated by $d$ points. 

We assign $\ptr_y$ to the $(3d/2+1)$-th element of $L_y$ and $\ptr_x$ to the first element of $L_x$; the point $p_l$ 
with $p_l.y=\ptr_y.y-1/2$ and $p.l.x=0$ is the left endpoint of the first 
segment. Clearly, $p_l$ is dominated by $3d/2$ points. 
(1) The right endpoint of the segment with left endpoint 
$p_l$ can be found as follows. 
We traverse elements of $L_x$ 
that follow $\ptr_x$ until $d/2+1$ points $q$ such that $q.y> \ptr_y.y$ 
are visited and update $\ptr_x$ accordingly.  
%When a point $p$ stored in  $L_X$ is  visited, we remove the 
%entry with the point $p$ from $L_y$. 
The point $p_r$ with $p_r.x=\ptr_x.x-1/2$
and $p_r.y=p_l.y$ is the right endpoint of the currently constructed 
horizontal  segment.
(2) We identify the left endpoint of the next segment by traversing 
elements of $L_y$ that follow $\ptr_y$ until $d/2$ points $q$ such that 
$q.x > \ptr_x.x$ are visited or we reach the end of the list  $L_y$. 
The pointer $\ptr_y$ is updated accordingly.
If we reached the end of $L_y$, then all points are processed and the 
algorithm is completed.
Otherwise we set $p_l.x=\ptr_x.x-1/2$ and $p_l.y= \ptr_y.y-1/2$, go to 
step (1) and determine the right endpoint of the next segment. 
\begin{theorem}
There exists a linear  space data structure that supports 
dominance queries for a set of linearly moving points 
on $U\times U$ grid in $O(\sqrt{\log U/\log \log U}+k)$
time and updates after kinetic events in $O(\log n)$ worst-case time.
The amortized cost of updates is $O(\log \log n)$.  
If trajectories of the points do not change, then the total number of kinetic
 events is bounded by $O(n^2)$. 
\end{theorem}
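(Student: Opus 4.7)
The plan is to assemble the components developed in Sections~\ref{sec:onedim} and~\ref{sec:domin}, fix $d=\log n$ in the $d$-approximate boundary $\cM$, and verify the four assertions of the theorem (linear space, query time, worst-case and amortized update time, and $O(n^2)$ bound on events) one at a time. Each verification largely reuses the invariants, the query procedure, and the update cases already laid out in Section~\ref{sec:domin}; the only genuinely new work is the amortized analysis.

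For space, I would use the Lemma to bound the number of horizontal segments of $\cM$ by $O(n/d)=O(n/\log n)$, and observe that each associated structure $D_s$ of~\cite{AAE03} stores $O(d)$ points in $O(d)$ space, summing to $O(n)$. The global kinetic structure $G$ of~\cite{AAE03}, the van~Emde~Boas / union-split-find index identifying the covering segment, and the predecessor/point-location structure of Section~\ref{sec:onedim} each contribute $O(n)$. For query time I would simply retrace the procedure of Section~\ref{sec:domin}: the predecessor of $q.x$ costs $O(\sqrt{\log U/\log\log U})$, locating its covering segment costs $O(\log\log n)$, and the final dispatch runs either on some $D_g$ in $O(\log\log n + k)$ time when $q$ dominates a point of $\cM$, or on $G$ in $O(\log n + k)=O(k)$ time, since in the latter branch $k\geq d=\log n$.

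For updates, I would first translate a kinetic event into at most one $x$-move and one $y$-move, and apply the corresponding case analysis from Section~\ref{sec:domin}. In the worst case a constant number of segments are created or destroyed, and rebuilding their $D_s$ structures from scratch dominates at $O(d)=O(\log n)$. For the amortized bound $O(\log\log n)$, the hard part is to charge these $O(d)$-time restructurings to the $\Omega(d)$ events that must elapse between consecutive violations of Invariants~\ref{inv:lpoint} or~\ref{inv:segm} at the same place; I would use a potential that, for each segment $s$, accumulates the slack between the current number of points dominating the endpoints of $s$ and the thresholds $d$ and $2d$ permitted by the invariants. A single $x$- or $y$-move can change each such count only by $\pm 1$, so the potential decreases by at most $O(1)$ per event, whereas every restructuring is triggered only after the slack has been driven to zero; this charges the $O(d\log\log n)$ rebuild cost across $\Omega(d)$ events for an amortized $O(\log\log n)$ per event, while allowing the $\log\log n$ pointwise updates to $D_s$ and to $G$ per event.

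Finally, for the event-count bound, I would observe that between trajectory changes the $x$-coordinate of each point is linear in $t$, so the $(t,x)$-graphs of any two points cross at most once per pair of linear pieces, which sums to $O(n^2)$ $x$-swaps, and symmetrically $O(n^2)$ $y$-swaps. I expect the main obstacle to be the potential-function argument in the previous paragraph: designing a potential whose total change over the sequence of events is small enough to absorb the $O(\log n)$ rebuilds and yet whose per-event growth is $O(1)$ is the delicate ingredient, and it is precisely this step that the excerpt defers to the full version.
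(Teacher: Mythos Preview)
Your proposal is essentially the paper's own argument: the theorem is not given a separate proof block but is stated as a summary of the construction in Section~\ref{sec:domin}, and you have correctly identified and assembled those pieces (space via the Lemma and $d=\log n$, query time via the two-branch dispatch, worst-case update via rebuilding a constant number of $D_s$, and the $O(n^2)$ event count from pairwise crossings of linear $(t,x)$- and $(t,y)$-graphs). The one place where you go beyond the paper is the potential-function sketch for the amortized bound; the paper explicitly defers exactly this step (``We will show in the full version that we update $\cM_i$ because Invariants~\ref{inv:lpoint} or~\ref{inv:segm} are violated at most once for $\Omega(d)$ events''), so your potential argument is an addition rather than a divergence.

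One small slip: you say the per-event updates to $G$ cost $O(\log\log n)$, but the paper specifies $O(\log n)$ for $G$ (and likewise $O(\log n)$ amortized for the Section~\ref{sec:onedim} predecessor structure). The paper's $O(\log\log n)$ amortized claim is made in the paragraph that discusses only the $D_s$ maintenance, so read the amortized statement as pertaining to the boundary-maintenance cost; the auxiliary structures $G$ and $\cV$ already sit at $O(\log n)$, which is consistent with the worst-case bound but not with a literal $O(\log\log n)$ total. This tension is in the paper itself and not something your proposal introduces.
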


 \section{Orthogonal Range Reporting Queries}
\label{sec:orth}
 Three-sided range reporting queries and orthogonal range reporting 
 queries can be reduced to dominance queries with help of standard techniques. 
 However to apply these techniques in our scenario, we must modify the data 
 structure  of section~\ref{sec:domin}, so that insertions and deletions 
are supported in some special cases.
At the end of this section we demonstrate how our data structure with 
additional operations can be used to support arbitrary orthogonal range 
reporting queries. 

 {\bf Additional Update Operations.}
 We first describe the data structure that supports insertions and deletions 
 in two special cases:
 Let $x_{\min}$ and $x_{\max}$  be the smallest and the largest  $x$-coordinates
 of  points in $S$.
 The operation  $\insp_x$ inserts a point $p$ with $x_{\max} <  p.x$. 
 The operation $\delp_x$ deletes a point $p$ with $p.x=x_{\max}$.
 Operations $\insm_x$ and $\delm_x$ insert and delete a point whose 
 $x$-coordinate is smaller than $x$-coordinates of all other points in $S$. 
 It is easy to augment our data structure so that $\insm_x$ and $\delm_x$
 are supported: the inserted (deleted) point is either below a $d$-approximate 
 boundary $\cM$ or 
 dominates only the leftmost horizontal segment of $\cM$. 
 Hence, each $\insm_x$ and $\delm_x$  affects the data structure $D_s$ 
 for at most one segment $s$. Essentially we can handle 
 $\insm_x$ and $\delm_x$ in the same way as $x$-moves for the leftmost 
 segment $s$.
 Maintaining the $d$-approximate boundary $\cM$ after $\insp_x$ and $\delp_x$ 
 is more involved: 
 since the $y$-coordinate of a newly inserted (deleted) point can be 
 larger than the $y$-coordinates of all (other) points in $p$, we may have 
 to update  data structures $D_s$ for all segments $s\in \cM$
 after a single update operation. Below we describe how $\insp_x$ and 
 $\delp_x$ can be supported.

Our approach is similar to the logarithmic method~\cite{B79,O87} that is used 
to transform static data structures into data structures that support 
insertions.
 We construct the  data structure of section~\ref{sec:domin} augmented 
 with $\insm_x$ and $\delm_x$  for sets $H_2,H_3,\ldots, H_m$. 
 A point $p$ $x$-overlaps with point $q$ if $p.x >  q.x$. A point $p$ 
$x$-overlaps with a set  $S$ if it $x$-overlaps with at least one
 point  $q\in S$. Each set $H_i$ satisfies the following conditions:
 \begin{enumerate}
 \item
 For $i=2,\ldots,m-1$, $H_i$ contains between $2^{2i-1}$ and $2^{2i+2}$ points;
 $H_m$ contains between $2^{2i-2}$ and $2^{2i+2}+2^{2i}$ points
 \item
 Each point of  $H_{i}$  $x$-overlaps at most $2^{2i-4}$ points in  $H_{i-1}$ 
 \item
At most $2^{2i-3}$ points from  $H_i$ $x$-overlap with $H_{i-1}$
 \end{enumerate}
 As follows from conditions 2 and 3, no element of $H_{i+1}$ $x$-overlaps with 
$H_{i-1}$: The rightmost point in $H_{i+1}$ $x$-overlaps at most 
$2^{2i-2}$ leftmost points in $H_i$ by condition 2. Only $2^{2i-3}$ 
rightmost points in $H_i$ can $x$-overlap with a point in $H_{i-1}$. 
Hence, any $q\in H_{i+1}$ that $x$-overlaps a point in $H_{i-1}$ would 
$x$-overlap $|H_i|-2^{2i-3}>2^{2i-2}$ points in $H_i$, which contradicts 
condition 2. 

For $j=1,\ldots,m$, we maintain $\max_y(j)=\max\{p.y|p\in H_{j}\}$
and $\min_x(j)=\min\{p.x|p\in H_{j}\}$. 
All $\min_x(j)$ are stored in a kinetic binary search tree. 
For each $j$, $2\leq j \leq m$, a point $p_j$ such that $p_j.y=\max_y(j)$ 
and $p_j.x=j$ is stored in a kinetic  data structure $Y$; since $Y$ 
contains $O(\log n)$ points, 
$Y$ supports dominance reporting queries in $O(\log \log n + k)$ time.
The data structure $D_j$, $j=2,\ldots, m$,
 contains all points from $H_j$ and supports
dominance queries, $x$-moves and $y$-moves as described in 
section~\ref{sec:domin}. 
To speed-up update operations, we store only one data structure 
$G$ for all points in $\cup_{j=2}^m H_j$. $G$ is implemented as described in~\cite{AAE03}, supports dominance 
queries in $O(\log n +k)$ time and can be modified to support 
arbitrary updates as well as kinetic events in $O(\log n)$ time. 
We also store  one 
data structure $\cV$ that contains $x$-coordinates of all points 
$\cup_{j=2}^m H_j$ and enables us to search in the set of $x$-coordinates 
at any time $d$. That is, all $D_j$ share one data structure $G$ and one 
data structure $\cV$.  
For each $j$ we also store all points of  $H_j$ in a list $L_j$ that contains 
all points from $H_j$ in the descending order of their $y$-coordinates.

Given a query $Q= \halfrightsect{a}{+\infty}\times\halfrightsect{b}{+\infty}$,
 we can find in $O(\log\log n)$ time the 
smallest  index $j$, such that  at least one point in $H_j$ has 
$x$-coordinate smaller than $a$. Then, as follows from conditions 2 and 3, 
 $x$-coordinates of all points in $H_2\cup\ldots\cup  H_{j-1}$ 
are greater than $a$, and both
  $H_{j}$ and $H_{j+1}$ may contain points whose 
$x$-coordinates are greater  than or equal to  $a$.
Sets $H_{f}$, $f>j+1$, 
contain only points whose $x$-coordinates are smaller than 
$a$.
Using $Y$, we can identify all $H_f$ such that $f< j$ and $H_f$ contains 
at least 
one point $p$ with $p.y \geq b$. For every such $f$ all points $p$ 
such that $p\in H_f$ and $p.y \geq b$ can be reported by traversing the list 
$L_f$. Hence, reporting all points  $p\in H_f$ such that $f< j$, 
$p.x \geq a$ and $p.y\geq b$ takes $O(\log \log n + k)$ time. 
We can report all points in $H_j$ and $H_{j+1}$ that belong to 
$Q= \halfrightsect{a}{+\infty}\times\halfrightsect{b}{+\infty}$ 
in $O(\sqrt{\log U/\log \log U} +k)$ time using data structures 
$D_j$ and $D_{j+1}$ respectively.

It remains to show how conditions 1-3 above can be maintained.
Clearly, conditions 1-3 are influenced by $x$-moves and operations 
$\insm_x$, $\delm_x$, $\insp_x$ and $\delp_x$; $y$-moves cannot violate them.  
We say that a set $S$ is \emph{$x$-split} into sets $S_1$ and $S_2$ if 
$S_1\cup S_2=S$ and the $x$-coordinates of all points in $S_1$ are larger than 
the $x$-coordinates of all points in $S_2$. 
After an operation $\insp_x$ or $\delp_x$, we re-build the data 
structure for $H_2$ in $O(1)$ time. 
When the number of elements in a set $H_j$, $j<m$, becomes smaller than 
$2^{2j-1}$ or greater than $2^{2j+1}$, 
we $x$-split the set $H_{j+1}\cup H_j$ into two new sets 
$H'_j$ and $H'_{j+1}$, so that $H'_j$ contains $2^{2j}$ points 
and $H'_{j+1}$ contains $|H_j|+|H_{j+1}|-2^{2j}$ points. 
If the number of elements in $H_m$ exceeds $2^{2m+2}+2^{2m}$ we $x$-split 
$H_j$ into sets $H'_m$ and $H'_{m+1}$ that contain $2^{2m}$ and $2^{2m+2}$ 
points respectively. 
Suppose that the number of points in $H_m$ is smaller than $2^{2m-2}$. 
If $|H_{m-1}|+|H_m|\leq 3\cdot 2^{2m-2}$, we set $H'_{m-1}=H_{m-1}\cup H_m$   
and decrement $m$ by $1$. 
If $|H_{m-1}|+|H_m|>  3\cdot 2^{2m-2}$, we $x$-split $H_{m-1}\cup H_m$ into 
$H'_{m-1}$ and $H'_m$ that contain $2^{2m-2}$ and $|H_{m-1}|+|H_m|-2^{2m-2}$
points respectively. 
We also take care that conditions 2 and 3 are maintained.
If, as a result of $x$-moves, the number of points in some 
$H_j$ that $x$-overlap  $H_{j-1}$ exceeds $2^{2j-3}$, or there is at least 
one point in $H_j$ that $x$-overlaps more than $2^{2j-4}$ points in $H_{j-1}$, 
then we $x$-split 
$H_{j-1}\cup H_{j}$ into sets $H'_{j-1}$ and  $H'_{j}$ that contain $|H_{j-1}|$ 
 and $|H_{j}|$ elements respectively. Each set $H_j$ is rebuilt at most 
once after a sequence of $\Theta(H_j)$ special insert or delete operations. 
Each $H_j$ is also re-built at most once 
 after a sequence of $\Theta(|H_j|)$ $x$-moves, i.e., after $\Theta(H_j)$ 
kinetic events. 
We can maintain 
the list of points in each $H_j$ sorted by their $y$-coordinates; hence, the 
data structure  $D_j$ for a newly re-built set $H_j$ can be constructed in 
$O(|H_j|)$ time. 
Therefore, the amortized  cost of updates and  kinetic events is 
 $O(\log n)$. We can de-amortize update costs using the same techniques as 
in the logarithmic method~\cite{O87}. 
\begin{lemma}
\label{lemma:insx}
There exists a $O(n)$ space data structure that supports 
 dominance 
queries on $U\times U$ grid in $O(\sqrt{\log U/\log \log U} + k)$ time.
Updates after kinetic events and operations $\insp_x$, $\delp_x$, $\insm_x$ 
and $\delm_x$ are supported in $O(\log n)$ time. 
\end{lemma}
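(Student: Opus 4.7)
The plan is to assemble the data structure exactly as sketched in the preceding paragraphs and then verify the three claims of the lemma (space, query time, update time) one by one.

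First, I would formalize the bucketing: partition $S$ into sets $H_2,\ldots,H_m$ satisfying conditions~1--3, and for each $H_j$ install the dominance structure of Section~\ref{sec:domin} augmented with $\insm_x$ and $\delm_x$ (which, as already noted, only touches the leftmost segment of the boundary $\cM_j$ and is handled like an $x$-move at that segment). Alongside the per-bucket structures I would maintain the shared ingredients already listed: the global kinetic dominance structure $G$ of~\cite{AAE03}, the $x$-coordinate search structure $\cV$ shared across all $D_j$, the kinetic BST on $\{\min_x(j)\}$, the small structure $Y$ on the $(\max_y(j),j)$ pairs, and for each bucket the list $L_j$ sorted by $y$. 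Since $|H_j|=\Theta(2^{2j})$ and each $D_j$ needs only linear space by the theorem of Section~\ref{sec:domin}, the total space is $O(\sum_j |H_j|)=O(n)$.

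For the query procedure, given $Q=\halfrightsect{a}{+\infty}\times\halfrightsect{b}{+\infty}$, I would use $\cV$ (or the kinetic BST on $\min_x(j)$) in $O(\sqrt{\log U/\log\log U})$ time to find the smallest index $j$ with a point of $H_j$ lying to the left of $a$. Conditions~2 and 3 then guarantee, by the overlap argument in the paragraph preceding the lemma, that no $H_f$ with $f<j$ contains any point with $x<a$, so every output point in those buckets is simply a prefix of $L_f$ with $y\ge b$. I would use $Y$ to enumerate only those indices $f<j$ with $\max_y(f)\ge b$, and walk the corresponding lists $L_f$ to emit their contributing prefixes in $O(\log\log n + k)$ extra time. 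The two borderline buckets $H_j, H_{j+1}$ are queried directly with $D_j, D_{j+1}$, each in $O(\sqrt{\log U/\log\log U}+k)$ time by the theorem of Section~\ref{sec:domin}, giving the advertised bound overall.

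For updates, $\insp_x$ and $\delp_x$ only touch $H_2$ and trigger rebuilding of that constant-size bucket. Kinetic events, $\insm_x$, $\delm_x$ may violate condition~1 (bucket sizes) or conditions~2--3 (overlap with the neighbour); whenever a bucket $H_j$ violates one of these, I $x$-split $H_{j-1}\cup H_j$ (or $H_j\cup H_{j+1}$) as prescribed, rebuilding each affected $D_j$ from the $y$-sorted list in $O(|H_j|)$ time. The main obstacle is the amortized accounting: I would argue, in the style of the logarithmic method~\cite{O87,B79}, that a bucket $H_j$ can only be rebuilt after $\Theta(|H_j|)$ operations affecting it (since after a rebuild its size and overlap counts are a constant factor away from every violation threshold), which charges $O(\log n)$ amortized time to each kinetic event or special update. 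Standard de-amortization spreading each rebuild over the window of operations that pays for it turns this into the required $O(\log n)$ worst-case bound, completing the lemma.
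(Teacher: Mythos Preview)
Your proposal is correct and follows essentially the same approach as the paper: the bucketing into $H_2,\ldots,H_m$ with conditions~1--3, the shared structures $G$, $\cV$, $Y$, the per-bucket $D_j$ and lists $L_j$, the query procedure via $Y$ and the two borderline buckets, and the amortized-then-de-amortized rebuild accounting are all exactly what the paper does in the paragraphs preceding the lemma. The only minor slip is that locating the smallest index $j$ uses the kinetic BST on $\{\min_x(j)\}$ (which has $O(\log n)$ entries) in $O(\log\log n)$ time rather than $O(\sqrt{\log U/\log\log U})$, but this is well within the stated budget and does not affect the argument.
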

{\bf Three-Sided Reporting Queries.}
Now we are ready to describe  data structures that support 
three-sided reporting queries, i.e., the query range is a product of 
a closed interval and a half-open interval.  
We apply the standard method used in e.g.~\cite{VV96},~\cite{SR95}
to augment data structure for dominance queries.

Let $T$ be %the WBB tree~\cite{AV96}  
an arbitrary balanced tree with constant node degree on the set of 
$x$-coordinates of all 
points in $S$. We associate an interval $\halfleftsect{a_l}{b_l}$ with 
each leaf $l$ of $T$, where $a_l$ is the predecessor of  the smallest
 value $m_l$ stored in the
 node $l$, and $b_l$ is the largest value stored in the node $l$. 
We associate an interval $(a_l,+\infty)$ with the rightmost leaf $l$.
With each internal node $v$ of $T$ we associate an interval 
$\inter(v)=\cup \inter(v_i)$ for all children $v_i$ of $v$. 
Let $S_v$ be the set of points $p\in S$ such that $p.x\in \inter(v)$.
In every internal node $v$ we store two data structures $\cL_v$ 
and $\cR_v$ %, implemented as in Lemma~\ref{lemma:insx},
 that support dominance 
reporting queries open to the left and open to the right (i.e., 
queries $\halfleftsect{-\infty}{a}\times\halfleftsect{-\infty}{b}$ and 
$\halfrightsect{a}{+\infty}\times\halfleftsect{-\infty}{b}$) for the 
set $S_v$.  
In each node $v$ we also store the list of points in $S_v$  sorted by their
 $y$-coordinate. 

Given a three-sided query with $Q=[a,b]\times\halfleftsect{-\infty}{c}$ 
we can find in time $O(\sqrt{\log U/\log \log U})$ the node $v$ such that 
$[a,b]\subset \inter(v)$, but $[a,b]\not\subset\inter(v_i)$ for all children
 $v_i$ of $v$. Suppose $\inter(v_j)\subset [a,b]$ for $j=r,r,+1,\ldots,q$. 
Then $x$-coordinates of all points in children $v_r,\ldots, v_q$ of $v$ 
belong to $[a,b]$. We can report all points $p$ in $v_r,\ldots, v_q$ 
whose $y$-coordinate do not exceed $c$ using sorted lists of points 
in $S_{v_r},\ldots, S_{v_q}$.
We also answer two dominance queries $Q_1=\halfrightsect{a}{+\infty}\times
\halfleftsect{-\infty}{c}$ and $Q_2=\halfleftsect{-\infty}{b}\times
\halfleftsect{-\infty}{c}$ with help of data structures $\cR_{v_{r-1}}$ 
and $\cL_{v_{q+1}}$ respectively.

After a kinetic event affecting two points $p$ and $q$ from the same set 
$S_v$, the data structures $\cL_v$ and $\cR_v$ are updated. 
After a kinetic event that affects points $p$ and $q$ that belong 
to two neighbor sets $S_{v_i}$ and $S_{v_{i+1}}$ respectively,
we swap $p$ and $q$:
$p$  is removed from $S_{v_i}$ and inserted 
into $S_{v_{i+1}}$, and  $q$ is removed from $S_{v_{i+1}}$ and inserted 
into $S_{v_i}$. 
In this case a constant number of operations  $\insm_x$ and $\delm_x$ 
(resp.\ $\insp_x$ and $\delp_x$) is performed.
Each point belongs to $O(\log n)$ sets $S_v$. Hence, the space usage 
is $O(n\log n)$ and  an update after a kinetic event takes $O(\log^2 n)$ time. 
\begin{lemma}\label{lemma:3sid}
There exists a $O(n\log n)$ space data structure that supports 
three-sided reporting 
queries for a set of linearly moving points on a $U\times U$ grid in 
$O(\sqrt{\log U/\log \log U}+k)$ time 
and updates after kinetic events in $O(\log^2 n)$ time. 
If trajectories of the points do not change, then the total number of kinetic
 events is $O(n^2)$.
\end{lemma}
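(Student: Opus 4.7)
The plan is to reduce three-sided queries $[a,b]\times(-\infty,c]$ to dominance queries using a standard interval-tree decomposition on $x$-coordinates, and to augment each tree node with the structure of Lemma~\ref{lemma:insx}. First I would build a balanced, constant-degree tree $T$ whose leaves store the $x$-coordinates of $S$, with an interval $\inter(v)$ associated to each node $v$ as the union of its children's intervals. At every internal $v$ I store $S_v=\{p\in S:p.x\in\inter(v)\}$ in two copies of the Lemma~\ref{lemma:insx} structure, one reflected so that it supports dominance queries open to the left and one open to the right, together with a list of $S_v$ sorted by $y$-coordinate.

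Query answering then proceeds as follows: using the one-dimensional search of Section~\ref{sec:onedim} I locate the leaves containing the $x$-predecessors of $a$ and $b$ in $O(\sqrt{\log U/\log\log U})$ time, then walk up to the LCA $v$ so that $[a,b]\subset\inter(v)$ but no single child of $v$ contains $[a,b]$. The children of $v$ split into a leftmost boundary child $v_{r-1}$, middle children $v_r,\ldots,v_q$ whose intervals lie entirely inside $[a,b]$, and a rightmost boundary child $v_{q+1}$. For the middle children I walk the $y$-sorted lists and stop once $y>c$, paying only $O(k)$ for reporting. For the two boundary children I issue one right-open dominance query on $\cR_{v_{r-1}}$ and one left-open dominance query on $\cL_{v_{q+1}}$, each costing $O(\sqrt{\log U/\log\log U}+k)$ by Lemma~\ref{lemma:insx}.

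For maintenance, I would distinguish two types of kinetic event. If the two points $p,q$ whose $x$-ranks swap lie in the same leaf-interval, the change affects only their common ancestors, all of which receive a routine $x$-move handled internally by Lemma~\ref{lemma:insx}. If they lie in adjacent leaf-intervals $S_{v_i},S_{v_{i+1}}$, then after the swap $p$ has become the $x$-maximum of $S_{v_i}$ and $q$ the $x$-minimum of $S_{v_{i+1}}$ (or vice versa), so on each of the two root-to-leaf paths the ancestors see exactly one $\insp_x/\delp_x$ or $\insm_x/\delm_x$ update; there are $O(\log n)$ such ancestors, and each update costs $O(\log n)$, giving the claimed $O(\log^2 n)$ per event. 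Storing each point in all of its $O(\log n)$ ancestor sets yields the $O(n\log n)$ space bound. The bound on the total number of events follows from the fact that any two linear trajectories swap $x$- or $y$-rank a constant number of times, so over all pairs the number of rank-swap events is $O(n^2)$.

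The main obstacle is verifying that the adjacent-leaf case genuinely reduces to the restricted operations of Lemma~\ref{lemma:insx} at every ancestor of both leaves: I must argue that the point arriving into a set $S_u$ really holds the extremal $x$-coordinate of $S_u$, not just of $S_{v_i}$, for all ancestors $u$ up to (but excluding) the LCA — this is precisely where the interval-tree invariant that sibling intervals are $x$-ordered is used, and it is what makes $\insp_x/\delp_x$ and $\insm_x/\delm_x$, rather than general insertion/deletion, sufficient at every level.
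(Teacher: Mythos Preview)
Your proposal is correct and follows essentially the same approach as the paper: a balanced constant-degree tree on $x$-coordinates, with each node storing left- and right-open dominance structures from Lemma~\ref{lemma:insx} plus a $y$-sorted list, queried by descending to the splitting node and combining two dominance queries with list traversals, and maintained under rank-swap events via the restricted $\insp_x/\delp_x/\insm_x/\delm_x$ operations along the affected root-to-leaf paths. You even make explicit the point the paper leaves implicit, namely that the $x$-ordering of sibling intervals guarantees the migrating point is extremal in every ancestor below the LCA; the only minor additions needed are to note that the LCA can be found in $O(1)$ after the two predecessor queries (so the walk does not cost $O(\log n)$), and that $y$-rank swaps are handled as ordinary $y$-moves in all $O(\log n)$ ancestor structures.
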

{\bf Orthogonal Range Reporting Queries.}
In a similar way to Lemma~\ref{lemma:insx} we can extend the data structure 
to support update operations in two other special cases.
 Let $y_{\min}$ and $y_{\max}$  be the smallest and the largest  $y$-coordinates
 of  points in $S$.
 The operation  $\insp_y$ inserts a point $p$ with $y_{\max} <  p.y$. 
 The operation $\delp_y$ deletes a point $p$ with $p.y=y_{\max}$.
 Operations $\insm_y$ and $\delm_y$ insert and delete a point whose 
 $y$-coordinate is smaller than $y$-coordinates of all other points in $S$. 
Again, it is easy to modify the data structure of Lemma~\ref{lemma:3sid} 
so that it supports $\insm_y$ and $\delm_y$ because these operations 
affect at most one segment of $\cM$. We can support $\insp_y$ and $\delp_y$ 
using the same construction as in Lemma~\ref{lemma:insx}.
 
A point $p$ $y$-overlaps with a point $q$ if $p.y>q.y$.
Analogously to Lemma~\ref{lemma:insx}, points are stored 
in sets $V_2,\ldots, V_m$ and we maintain the invariants: 
 \begin{enumerate}
 \item
 For $i=2,\ldots,m-1$, $V_i$ contains between $2^{2i-1}$ and $2^{2i+2}$ points;
 $V_m$ contains between $2^{2i-2}$ and $2^{2i+2}+2^{2i}$ points
 \item
 Each point of  $V_{i}$  $y$-overlaps at most $2^{2i-4}$ points from $V_{i-1}$ 
 \item
At most $2^{2i-3}$ points from  $V_i$ $y$-overlap with $V_{i-1}$
 \end{enumerate}

Elements of $V_i$ are stored in an augmented  data structure $E_i$ 
of Lemma~\ref{lemma:insx} 
so that kinetic events and operations  $\insm_x$, $\delm_x$, $\insp_x$, 
$\delp_x$, $\insm_y$ and $\delm_y$ are supported. 
For $j=1,\ldots,m$, we maintain $\max_x(j)=\max\{p.x|p\in V_{j}\}$
and $\min_y(j)=\min\{p.y|p\in V_{j}\}$. 
All $\min_y(j)$ are stored in a kinetic binary tree. 
For each $j$,  we store a point $p_j$ with  $p_j.x=\max_x(j)$ and 
$p_j.y=j$ in a kinetic  data structure $X$; 
$X$  supports dominance reporting queries 
in $O(\log \log n + k)$ time and updates in $O(\log n)$ time. 
For each $j$ we also store all points of  $V_j$ in a list $L'_j$ that contains 
all points from $V_j$ in the descending order of their $x$-coordinates.

Given a query $Q=\halfrightsect{a}{+\infty}\times\halfrightsect{b}{+\infty}$,
we can find in $O(\log \log n)$ time the smallest index $j$, such that 
at least one point in $V_j$ has $y$-coordinate smaller than $b$. 
According to conditions 2 and 3 above, $V_j$ and $V_{j+1}$ may contain points 
whose $y$-coordinate are greater than or equal to  $b$. 
The $y$-coordinates of all points in
 $V_2\cup\ldots\cup V_{j-1}$ are greater than $b$. 
The $y$-coordinates of all points in sets $V_i$, $i>j+1$, are smaller than $b$.

Using $X$, we can identify all $V_f$ such that $f< j$ and $V_f$ contains 
at least 
one point $p$ with $p.x \geq a$. For every such $f$, all points $p$ 
such that $p\in H_f$ and $p.x \geq a$ can be reported by traversing the list 
$L'_f$. Hence, reporting all points  $p\in V_f$ such that $f< j$, 
$p.x \geq a$ and $p.y\geq b$ takes $O(\log \log n + k)$ time. 
We can report all points in $H_j$ and $H_{j+1}$ that belong to 
$Q= \halfrightsect{a}{+\infty}\times\halfrightsect{b}{+\infty}$ 
in $O(\log \log U +k)$ time using data structures $E_j$ and $E_{j+1}$ 
respectively. 
 
When a point $p$ is inserted with an operation $\insp_x$ or $\insm_x$,
we identify the data structure $E_i$, such that $p.y > \min_y(i)$ but
$p.y< \min_y(j)$ for all $j<i$, and insert $p$ into $E_i$ as described
in Lemma~\ref{lemma:insx}. When a point $p$ is deleted with operations
$\delp_x$ or $\delm_x$, we delete $p$ from the data structure $E_i$
such that $p\in H_i$. When a point is inserted or deleted with
operations $\insm_y$ or $\delm_y$, we insert or delete this point into
the set $V_m$. If a point is inserted or deleted with operations
$\insp_y$ or $\delp_y$, we rebuild the data structure $E_2$.
Invariants 1-3 for sets $V_i$ can be maintained under $\insp_y$,
$\delp_y$, and kinetic events with the same method that was used in
Lemma~\ref{lemma:insx} to maintain sets $H_i$.
 
Hence, we obtain
\begin{lemma}
\label{lemma:insy}
There exists a data structure that supports 
dominance 
queries on $U\times U$ grid in $O(\sqrt{\log U/\log \log U}+k)$ time.
Kinetic events and operations 
$\insp_x$, $\delp_x$, $\insm_x$, $\delm_x$, $\insp_y$, $\delp_y$, $\insm_y$, 
and $\delm_y$ are supported in $O(\log n)$  time. 
\end{lemma}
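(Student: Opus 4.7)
My plan is to prove Lemma~\ref{lemma:insy} by essentially mirroring the logarithmic-method construction of Lemma~\ref{lemma:insx}, but now stratifying points by $y$-coordinate instead of $x$-coordinate, and using the $x$-augmented structure of Lemma~\ref{lemma:insx} as the ``inner'' building block. Concretely, I would maintain the sets $V_2,\ldots,V_m$ together with the three invariants stated before the lemma, store each $V_i$ in an instance $E_i$ of the data structure of Lemma~\ref{lemma:insx} (so that kinetic events and the four $x$-side updates inside $V_i$ cost $O(\log n)$), and keep the auxiliary structures $X$, the kinetic tree on $\{\min_y(j)\}$, and the $y$-sorted lists $L'_j$. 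Since the points of $X$ have $O(\log n)$ size, $X$ can answer dominance in $O(\log\log n+k)$ time, matching the construction already used with $Y$ in Lemma~\ref{lemma:insx}.

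First I would verify correctness of the query procedure: given $Q=\halfrightsect{a}{+\infty}\times\halfrightsect{b}{+\infty}$, the argument following the invariants shows that only $V_j$ and $V_{j+1}$ can contain points whose $y$-coordinate straddles $b$, while every $V_f$ with $f<j$ lies entirely above $b$ and every $V_f$ with $f>j+1$ lies entirely below $b$. Thus $V_f$ for $f<j$ contributes to the answer iff it contains some point with $x\ge a$, which is detected by a dominance query on $X$ and then enumerated by walking $L'_f$; meanwhile $V_j$ and $V_{j+1}$ are handled by $E_j$ and $E_{j+1}$. Each $E_j$ answers in $O(\sqrt{\log U/\log\log U}+k_j)$ by Lemma~\ref{lemma:insx}, the initial search for $j$ costs $O(\log\log n)$ on the $\{\min_y(j)\}$ tree, and all other steps are output-sensitive, giving the claimed query bound overall.

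Next I would handle each of the eight update operations. The operations $\insp_x,\delp_x,\insm_x,\delm_x$ route to the unique $E_i$ determined by $\min_y(i)$ and are absorbed by Lemma~\ref{lemma:insx}'s $O(\log n)$ cost. The operations $\insm_y,\delm_y$ touch only $V_m$, so they become operations of the same type inside $E_m$ and cost $O(\log n)$ by Lemma~\ref{lemma:insx}. The operations $\insp_y,\delp_y$ are analogous to the $\insp_x,\delp_x$ case of Lemma~\ref{lemma:insx}: a new topmost point cannot violate conditions 2 and 3 relative to $V_3,\ldots,V_m$, so we simply rebuild $E_2$ from scratch in $O(|V_2|)=O(1)$ amortized time. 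Kinetic events between two points in the same $V_i$ are forwarded to $E_i$, and a kinetic event between $V_i$ and $V_{i+1}$ is realized as a pair of deletions/insertions (one $\insp_y/\delp_y$--style pair on the boundary), analogously to the neighbor-swap handled in Lemma~\ref{lemma:3sid}.

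The main obstacle is the amortized analysis of invariant maintenance on the $V_i$'s. I would argue exactly as in Lemma~\ref{lemma:insx}: a rebuild of $V_i$ (by merge or $y$-split with a neighbor) is triggered only after $\Theta(|V_i|)$ operations of the relevant type have touched it, and the rebuild itself takes $O(|V_i|\log n)$ time because constructing a fresh $E_i$ on $|V_i|$ points sorted by $y$ costs $O(|V_i|\log n)$ by Lemma~\ref{lemma:insx}'s construction. Charging this cost to the $\Theta(|V_i|)$ triggering operations gives $O(\log n)$ amortized per operation; the standard worst-case de-amortization from~\cite{O87} then yields the claimed $O(\log n)$ worst-case bound, completing the proof.
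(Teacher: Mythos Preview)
Your overall construction matches the paper's: stratify by $y$ into $V_2,\ldots,V_m$, store each $V_i$ in a Lemma~\ref{lemma:insx} structure $E_i$, and use $X$, the $\min_y$ tree, and the lists $L'_j$ to assemble the answer. Two minor slips: the lists $L'_j$ must be sorted by $x$-coordinate, not $y$, since you walk them to output the points with $p.x\ge a$; and the inner structures $E_i$ must be augmented beyond the bare statement of Lemma~\ref{lemma:insx} to support $\insm_y$ and $\delm_y$ as well---the paper notes just before the lemma that this is easy because such a point affects at most one segment of each boundary $\cM$.

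The real gap is your treatment of a kinetic $y$-move whose two points lie in adjacent sets $V_i$ and $V_{i+1}$. You propose to swap them, ``analogously to the neighbor-swap handled in Lemma~\ref{lemma:3sid}.'' But in that lemma the swap is realized with $\insp_x,\delp_x,\insm_x,\delm_x$, all of which the inner structures support; here the swap would require $\delp_y$ and $\insp_y$ on $E_{i+1}$, and these are precisely the operations the outer $V$-stratification is designed to avoid---a new topmost point can dominate the left endpoint of $\Theta(|V_{i+1}|/\log n)$ boundary segments inside $E_{i+1}$, so a single such insertion is not $O(\log n)$. The paper does \emph{not} swap: it leaves each point in its current $V_i$, so the cross-boundary $y$-move is a non-event for every $E_i$ (neither point's $y$-rank within its own set changes), and simply lets the $y$-overlap between $V_i$ and $V_{i+1}$ accumulate until condition~2 or~3 is violated, triggering a $y$-split rebuild of $V_i\cup V_{i+1}$ after $\Theta(|V_i|)$ such events---exactly as $x$-moves are absorbed in Lemma~\ref{lemma:insx}. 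With this correction your amortization argument goes through as written.
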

We can obtain a $O(n\log n)$ space data structure for three-sided queries 
that supports operations  $\insp_y$, $\delp_y$, $\insm_y$, and $\delm_y$ 
as well as kinetic events in $O(\log^2 n)$ time 
in the same way as in the proof of Lemma~\ref{lemma:3sid}. 
Using the same technique once again, 
we obtain the result for general two-dimensional 
range reporting queries stated in Theorem~\ref{theor:report}.
\begin{theorem}\label{theor:report}
There exists a $O(n\log^2 n)$ space data structure that supports 
orthogonal range reporting queries on $U\times U$ grid 
in $O(\sqrt{\log U/\log \log U}+k)$ time and updates 
after kinetic events in $O(\log^3 n)$  time. 
If trajectories of the points do not change, then the total number of kinetic 
events is $O(n^2)$. 
\end{theorem}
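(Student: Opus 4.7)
The plan is to bootstrap from Lemma~\ref{lemma:insy} by applying the balanced-tree reduction of Lemma~\ref{lemma:3sid} twice. As already remarked in the paragraph immediately preceding the theorem, the first application upgrades Lemma~\ref{lemma:insy} to a three-sided reporting structure that \emph{also} supports the four special $y$-operations $\insp_y,\delp_y,\insm_y,\delm_y$: build a constant-degree balanced tree $T$ over the $x$-coordinates, assign intervals $\inter(v)$ to nodes and sets $S_v=\{p\in S\,|\,p.x\in\inter(v)\}$ to each internal node $v$, and at every $v$ store the left- and right-opening dominance structures of Lemma~\ref{lemma:insy} for $S_v$. A special $y$-operation on the global set propagates along one root-to-leaf path of $T$ and induces one special $y$-update on the Lemma~\ref{lemma:insy}-structure at each of the $O(\log n)$ visited nodes, for a total of $O(\log^2 n)$ time; the resulting structure uses $O(n\log n)$ space and answers three-sided queries in $O(\sqrt{\log U/\log\log U}+k)$ time.

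Second, I would apply the same decomposition once more, this time over $y$-coordinates. Build a constant-degree balanced tree $T'$ on the $y$-coordinates, assign $y$-intervals to its nodes, and at every internal node $w$ store the three-sided structure from the first step on $S_w=\{p\in S\,|\,p.y\in\inter(w)\}$, supporting both upward- and downward-opening queries. A query $[a,b]\times[c,d]$ is answered by locating the minimal node $w$ whose $y$-interval contains $[c,d]$ and issuing $O(1)$ three-sided queries at the children of $w$: one open-upward at the child containing $c$, one open-downward at the child containing $d$, and one restricted to $x\in[a,b]$ at each intermediate child (whose points already satisfy $y\in[c,d]$); the total query cost is $O(\sqrt{\log U/\log\log U}+k)$.

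For kinetic events, a swap of two points inside a common $S_w$ is handled by a kinetic update of that node's three-sided structure, while a swap between adjacent sets $S_{w_i}$ and $S_{w_{i+1}}$ is realized by a constant number of special $y$-operations ($\insp_y,\delp_y$ on one structure, $\insm_y,\delm_y$ on the other) --- precisely the operations for which the three-sided structure was augmented in the first step. Each point belongs to $O(\log n)$ sets $S_w$, so space totals $O(n\log^2 n)$ and each kinetic event triggers $O(\log n)$ propagations along $T'$, each costing $O(\log^2 n)$, giving worst-case update time $O(\log^3 n)$. The $O(n^2)$ bound on the total number of events carries over from Section~\ref{sec:onedim}. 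The only non-routine ingredient is the augmentation of the three-sided structure with special $y$-operations in the first step, but this parallels Lemma~\ref{lemma:insy} in the dominance setting: $\insm_y,\delm_y$ touch only the bottommost segment of each maintained boundary $\cM$, while $\insp_y,\delp_y$ are absorbed by a logarithmic-method decomposition into geometrically growing $y$-buckets analogous to the $V_i$ used there.
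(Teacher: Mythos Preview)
Your proposal is correct and follows essentially the same route as the paper: apply the balanced-tree reduction of Lemma~\ref{lemma:3sid} first over $x$-coordinates with Lemma~\ref{lemma:insy} structures at the nodes (yielding an $O(n\log n)$-space three-sided structure that inherits the four special $y$-operations via root-to-leaf propagation, with $O(\log^2 n)$ update time), then once more over $y$-coordinates with these three-sided structures at the nodes. The only cosmetic difference is that the paper handles the fully-covered intermediate children with sorted lists rather than degenerate three-sided queries, but since the tree has constant degree this does not affect the asymptotics; your final paragraph is a slightly redundant restatement of the propagation mechanism already described in your first paragraph (the $V_i$ buckets live inside each node's Lemma~\ref{lemma:insy} structure and absorb the propagated $y$-operation there), but it is not incorrect.
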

\section{Conclusion and Open Problems}
In this paper we describe a data structure for a set of moving points 
that answers orthogonal range reporting queries in 
$O(\sqrt{\log U/\log \log U}+k)$ time. The query time is dominated 
by the time needed to answer a point location query; the rest of the 
query procedure takes $O(\log \log n + k)$ time. Thus a better algorithm 
for the point location problem would  lead to a better query 
time of our data structure. Proving any $\Omega(\log \log U)$ 
 lower bound for our problem, i.e., proving that kinetic range reporting 
is slower than static range reporting, would be very interesting. 

While kinetic data structures usually support arbitrary changes 
of trajectories and only require that points follow constant-degree 
algebraic  trajectories, we assume that points move linearly and all 
changes of trajectories are known in advance. 
An interesting open question is whether we can construct a kinetic data 
structure that achieves $o(\log n/\log \log n)$ query time without these 
additional assumptions.

\end{document}